\documentclass[
final
]{dmtcs-episciences}


\usepackage[utf8]{inputenc}
\usepackage{subfigure}

%


\usepackage{amsthm,amssymb}
\usepackage{amsmath}
\usepackage{latexsym}
\usepackage{comment}
\usepackage{textcomp}
\usepackage{array}
\usepackage{lineno}

\newtheorem{theorem}{Theorem}
\newtheorem{lemma}{Lemma}
\newtheorem{corollary}{Corollary}

\newcommand{\mathcent}{\mbox{\textcent}}
\newcommand{\ket}[1]{\left|#1\right>}
\newcommand{\bra}[1]{\left<#1\right|}
\newcommand{\ep}{$_\varepsilon$}

\author{Masaki Nakanishi\affiliationmark{1}%
	\and Abuzer Yakary{\i}lmaz\affiliationmark{2}
	\and Aida Gainutdinova\affiliationmark{3}
}
\title{New Results on Classical and Quantum Counter Automata\thanks{A preliminary version appeared as ``Masaki Nakanishi, Abuzer Yakary{\i}lmaz:
		Classical and Quantum Counter Automata on Promise Problems. CIAA 2015: 224-237'' \cite{NY15A}. The arXiv number is 1412.6761.}}

\affiliation{Faculty of Education, Art and Science, Yamagata University, Japan\\
	Center for Quantum Computer Science, Faculty of Computing, University of Latvia,   Latvia \\
	Institute of Computational Mathematics and Information Technologies, Kazan Federal University, Russia
}

\keywords{quantum automata, counter automata, promise problems, blind counter, zero-error, Las-Vegas algorithms}

\received{2016-7-13}
\revised{2018-3-29}
\accepted{2019-9-3}
\begin{document}
\publicationdetails{21}{2019}{4}{13}{1528}
\maketitle
\begin{abstract}
	We show that one-way quantum one-counter automaton with zero-error is more powerful than its probabilistic counterpart on promise problems. Then, we obtain a similar separation result between Las Vegas one-way probabilistic one-counter automaton and one-way deterministic one-counter automaton. 
	
	We also obtain new results on classical counter automata regarding language recognition. It was conjectured that one-way probabilistic one blind-counter automata cannot recognize Kleene closure of equality language [A. Yakary\i lmaz: Superiority of one-way and realtime quantum machines. RAIRO - Theor. Inf. and Applic. 46(4): 615-641 (2012)]. We show that this conjecture is false, and also show several separation results for blind/non-blind counter automata.
\end{abstract}

\section{Introduction}

Quantum computation is a generalization of probabilistic computation which is a generalization of deterministic computation. It is natural to ask whether a quantum model is more powerful than its probabilistic counterpart and similarly whether a probabilistic model is more powerful than its deterministic counterpart. For a fair comparison between these three types of models, bounded-error models of quantum and probabilistic should be considered (as we do in this paper).

Quantum automata models are restricted models of quantum Turing machines, i.e., the type of memory and/or the direction of head movement can be restricted. We find it interesting to determine whether quantum models can have advantage in such restricted case. In this paper, we specifically focus on counter type of memory.

We have a more complete picture for constant-space models (finite state automata) when compared to models using memories (finite state automata augmented with counter(s), stack(s), tape(s), etc.). For example, one-way\footnote{The input is read as a stream from left to right and a single symbol is fed to the machine in each step. We also use two end-markers to allow the machine making some pre- and post-processing.} deterministic finite automata (1DFAs) are equivalent to one-way probabilistic finite automata (1PFAs) and one-way quantum finite automata (1QFAs): they all define the class of regular languages \cite{Sip06,Rab63,LQZLWM12,AG00}. On the other hand, in the case of two-way models\footnote{The input is written on a single-head read-only tape between two end-markers and the head can move in both directions or stay in the same tape square in each step.} abbreviated as 2DFA, 2PFA, and 2QFA, respectively, 2DFAs are equivalent to 1DFAs, 2PFAs are more powerful than 2DFAs, and 2QFAs are more powerful than 2PFAs. \cite{Sh59,Fre81,KW97,AW02}. As a special case, one-way with $\varepsilon$-moves\footnote{It is a restricted version of two-way head such that the head cannot move to the left.} (1\ep) quantum finite automata (1\ep QFAs) can recognize some non-regular languages if the head is allowed to be in a superposition \cite{AI99}. Note that $\varepsilon$-moves can be easily removed for the classical finite automata without increasing the number of states.

When considering finite automata using memory, there are more unanswered cases. The most challenging ones seem to be between quantum and probabilistic models. For example, 1PFAs with a counter (1P1CAs) are more powerful than 1DFAs with a counter (1D1CAs) \cite{Fre79} but we do not know whether 1QFAs with a counter (1Q1CAs) are more powerful than 1P1CAs -- we have only an affirmative answer for one-sided bounded-error \cite{SY12A}. For two-way models, abbreviated respectively 2D1CAs, 2P1CAs, and 2Q1CAs, only 2Q1CAs were shown to be more powerful than 2D1CAs \cite{Yak13B} and the other cases are still open. For one-way pushdown automata models, abbreviated as 1DPDAs, 1PPDAs, and 1QPDAs, respectively, 1DPDAs were shown to be weaker than even Las Vegas restriction of 1PPDAs \cite{HS10} and the question is open between quantum and probabilistic models \cite{YFSA12A}.


All mentioned results above are regarding language recognition. When considering solving promise problems, the picture can dramatically change. The notion of promise problems was introduced in \cite{ESY84}. Informally, promise problems are a generalization of language recognition such that the aim is to separate two disjoint languages that do not necessarily form the set of all strings. Promise problems have deep connection with fundamental issues in complexity theory such as unique solutions, approximation and complete problems. Readers may refer to \cite{Goldreich06} for a survey on these issues. For quantum computation, the first two notable results on promise problems are the Deutsch-Jozsa algorithm \cite{DJ92,CEMM98} and Simon's algorithm \cite{Simon97}. The results give separation between quantum and classical computation models in the exact and bounded-error settings, respectively. Also for automata models, promise problems have been investigated intensively  \cite{MNYW05,RY14A,GaiY15A,GefY15A,GQZ15,Nak15,BMY17,ZLQG17,GaiY18A}. The separation results can be obtained even for one-way models or the case of zero-error -- a very restricted case is that unary QFAs are more powerful than unary PFAs \cite{GaiY15A}. Also as pointed out in \cite{GefY15A}, the effects of randomness and quantumness can be more easily shown with promise problems and some open problems defined on language recognition can be answered in the case of solving promise problems. In \cite{MNYW05,Nak15}, exact\footnote{A single answer is given with probability 1.} 1\ep QPDAs are shown to be more powerful than exact 1\ep PPDAs, which are equivalent to 1\ep DPDAs. 

In this paper, we show that quantum models can still be more powerful if we replace the stack with a counter: we show that exact 1Q1CAs can solve a certain promise problem that cannot be solved by exact 1P1CAs, which are equivalent to 1D1CAs. As mentioned above, Las Vegas 1\ep PPDAs are more powerful than 1\ep DPDAs on language recognition. As the second separation, we obtain the same result between Las Vegas 1P1CAs and 1D1CAs on promise problems. In each separation, we define a new promise problem and give an algorithm for the more powerful model, and then, we present the impossibility result for the weaker model. As far as the authors know, separation results on neither language recognition nor solving promise problems were known for those automata models. Thus, our separation results on promise problems can be regarded as an important first step toward understanding the complexities of those automata models. 


Additionally, we present new results on classical counter automata. We disprove the conjecture defined by Yakary{\i}lmaz \cite{Yak12B}: Yakary{\i}lmaz separated 1QFAs with a blind counter from 1DFAs with a blind counter by using the language $ \mathtt{EQ^*} $, the Kleene closure of $ \mathtt{EQ} = \{ a^nb^n \mid n > 0 \} $, and then, the author conjectured  that the same language cannot be recognized by 1PFAs with a blind counter. However, we provide an algorithm for 1PFAs with a blind counter that recognizes $ \mathtt{EQ^*} $. We also show several separation results for blind/non-blind counter automata.

In the next section, we provide the required background and then we present our main results on promise problems in Section 3 and new classical results on language recognition in Section 4. 

A preliminary version of the paper was presented in CIAA 2015 \cite{NY15A}. In this version, we revise the overall paper and added new results and proofs. We modify the definitions of promise problems $ \tt ONE\mbox{-}NONE $ and $ \tt ONE\mbox{-}NONE(t) $ in Section 3.2 since we observe that the argument on the Las Vegas algorithm for $ \tt ONE\mbox{-}NONE(t) $ in \cite{NY15A} is not correct. After this modification, we obtain better success probability in Theorem 3 and we also give correct statement on $ \tt ONE\mbox{-}NONE(t) $ in Theorem 4. Since the promise problem $ \tt ONE\mbox{-}NONE $  is modified, we provide a new impossibility proof for 1D1CAs (Theorem 5). We should remark that this new proof is more complicated (and longer) than the previous proof. Theorem 6 is a new result, which was left open in  \cite{NY15A}. Additionally, we revise the second half of Section 4 and present new results regarding comparisons of classical models: Theorems 8, 9, and 10.

\section{Definitions}

Throughout the paper, $\Sigma$, not containing {\textcent} and \$ (the left and the right
end-markers, respectively), denotes the input alphabet;
$\tilde \Sigma = \Sigma \cup \{\mathcent, \$\}$; $Q$ is the set of (internal) states; $Q_a\subseteq Q$
(resp., $Q_r\subseteq Q$) is the set of accepting (resp., rejecting) states; $q_0$ is the initial state.
For any $w\in \tilde \Sigma^*$, $w(i)$ is the $i$-th symbol of $w$, and $|w|$ is the length of $w$. We assume the reader knows the basics of automata theory. We denote one-way deterministic and nondeterministic finite automata as 1DFA and 1NFA, respectively.

A promise problem $ \tt P = (P_{yes},P_{no}) $ defined on $ \Sigma $ is composed by two disjoint languages $ \mathtt{P_{yes}} \subseteq \Sigma^* $ and $ \mathtt{P_{no}} \subseteq \Sigma^* $, called respectively the set of yes-instances and the set of no-instances. 

A promise problem $ \tt P = (P_{yes},P_{no}) $ is said to be solved by a (probabilistic or quantum) machine $ M $ with error bound $ \epsilon < \frac{1}{2} $ if any yes-instance is accepted with probability at least $ 1 - \epsilon $ and any no-instance is rejected with probability at least $ 1 - \epsilon $. It is also said that $ \mathtt{P} $ is solved by $ M $ with bounded-error. If yes-instances (resp., no-instances) are accepted (resp., rejected) exactly, then it is said that $ \mathtt{P} $ is solved by $ M $ with negative (resp., positive) one-sided error bound $ \epsilon $. If $ \epsilon = 0 $, then it is said that the promise problem is solved exactly. 

A promise problem $ \tt P = (P_{yes},P_{no}) $ is said to be solved by a Las Vegas machine with success probability $ p>0 $ if
\begin{itemize}
	\item any yes-instance is accepted with probability at least $ p $ and is rejected with probability $ 0 $, and,
	\item any no-instance is rejected with probability at least $ p $ and is accepted with  probability $ 0 $.
\end{itemize}
Remark that all non-accepting or non-rejecting probabilities go to the decision of ``don't know''. 

If $ P_{yes} $ is the complement of $ P_{no} $, then conventionally it is said that the language $ P_{yes} $ is recognized by machine $ M $ instead of saying that promise problem $ P $ is solved by  machine $ M $. 

For all models, the input $w\in \Sigma^*$ is placed on a read-only
one-way infinite tape as  $\tilde w=\mathcent w \$$ between the cells indexed by 1 to $|\tilde w|$.
At the beginning, the head is initially placed on the cell indexed by 1 and the value of 
the counter is set to zero. Also, in the following definitions, $m$ denotes the maximum value by which the counter
may be increased or decreased at each step.

A one-way probabilistic one-counter automaton (1P1CA) is a 5-tuple
\[
M=(Q, \Sigma, \delta, q_0, Q_a),
\]
where $\delta: Q\times \Sigma\times \{Z, NZ\} \times Q\times \{-m, ..., m\}\longrightarrow [0,1]$ is a transition function such that $\delta(q, \sigma, z, q', c)=p$ means that
the transition from $q \in Q $ to $q' \in Q$ increasing the counter value by $ c \in  \{-m, ..., m\}  $  occurs with probability $p \in [0,1]$ if the scanned symbol is $\sigma \in \tilde{\Sigma}$ and the status of the counter value is $ z $, where $ Z $ (resp., $  NZ $) means zero (resp., non-zero). The transition function must satisfy the following condition since the overall probabilities must be 1 during the computation: for each triple $ (q \in Q, \sigma \in \tilde{\Sigma}, z \in \{Z,NZ\}), $
\[
\sum\limits_{q'\in Q, c \in \{-m,\ldots, m\}} \delta(q, \sigma, z, q', c) = 1.
\]
The computation is terminated after reading the whole input ($ \mathcent w \$ $) and the automaton accepts (resp., rejects) the input if the final state is in $Q_a$ (resp., $Q\setminus Q_a$). Then, for each input, the accepting (resp., rejecting) probability can be calculated by summing up the probabilities of all the accepting (resp., rejecting) paths.


A one-way probabilistic blind one-counter automaton (1P1BCA) is a 1P1CA such that it cannot see the status of the counter during the computation and the input is automatically rejected if the value of the counter is non-zero \cite{Gre78}. 
A 1P1BCA is a 5-tuple
\[
M=(Q, \Sigma, \delta, q_0, Q_a),
\]
where $\delta: Q\times \Sigma\times Q\times \{-m, ..., m\}\longrightarrow [0,1]$ is a transition function such that $\delta(q, \sigma, q', c)=p$ means that the transition from $q \in Q$ to $q' \in Q$ increasing the counter value by $c \in  \{-m, ..., m\} $  occurs with probability $p \in [0,1]$
if the scanned symbol is $\sigma \in \tilde{\Sigma}$.
As described above, the transition function must satisfy the following condition: for each pair $( q \in Q, \sigma \in \tilde{\Sigma}), $ 
\[
\sum\limits_{q'\in Q, c\in \{-m, \ldots, m\}} \delta(q, \sigma, q', c) = 1.
\]
The computation is terminated after reading the whole input ($ \mathcent w \$ $) and the automaton accepts the input if the counter value is zero and the state is in $Q_a$, otherwise it rejects the input. The accepting and rejecting probabilities for a given input are calculated as described above.

A configuration of a counter automaton (regardless of whether blind or not) is a pair $(q, v)$ of the current state and the current counter value. Here we do not consider the head position. In our proofs, this will not lead to any confusion.

For each of the above two models, we can define its deterministic version, where the range of
the transition function is restricted to $\{0,1\}$. We abbreviate them respectively as 1D1CA and 1D1BCA.

Moreover, a one-way nondeterministic blind one-counter automaton (1N\-1BCA) can be defined as a 1P1BCA with a special acceptance mode such that it accepts an input if the accepting probability is non-zero and it rejects the input if the accepting probability is zero. Here, each probabilistic choice (the probabilities are insignificant and can be removed) is called as a nondeterministic choice. Then, an input is accepted if and only if there is a path reaching an accepting condition.

Similarly, we can define a one-way universal blind one-counter automaton (1U1BCA), where the automaton accepts the input if the accepting probability is 1 and it rejects the input if the accepting probability is less than 1. In this case, each probabilistic choice (the probabilities are insignificant and can be removed) is called as a universal choice. Then, an input is accepted if and only if every computational path reaches an accepting condition.

A Las Vegas probabilistic machine is a probabilistic machine that (i) never gives a wrong answer but can give a third type of decision called ``don't know'' besides ``accepting'' and ``rejection'' and (ii) both of accepting and rejecting probabilities cannot be non-zero for the same input. For one-way Las Vegas automaton model, we split the set of states into three disjoint sets: the accepting, the rejecting, and neutral states. The automaton says ``don't know'' when it finishes its computation in a neutral state. 

Since quantum computation is a generalization of probabilistic computation \cite{Wat09A}, any quantum model  is expected to simulate its classical counterpart exactly. However, the earlier quantum finite automata (QFAs) models  (e.g. \cite{KW97,MC00}) were defined in a restrictive way and they do not reflect the full power of quantum computation. Even though they were shown to be more powerful than their classical counterparts in some special cases, these QFAs models cannot simulate classical finite automata. The first quantum counter automata model was defined based on these restricted models \cite{Kra99}, and so, they were also shown not to be able to simulate its classical counterparts \cite{YKI05}. Nowadays, we know how to define general quantum automata models that generalize probabilistic automata \cite{Hir10,YS11A}. Therefore, even a superiority result of a restricted model, as given in this paper, serves as a separation between the quantum and probabilistic model. Due to its simplicity, we give the definition of a restricted model that allows to represent our algorithm and we refer the reader to \cite{SY12A} for the definition of general quantum model. We assume the reader familiar with basics of quantum computation. We refer the reader to \cite{NC00} for a complete reference on quantum computation, to \cite{SayY14A} for a short introduction on QFAs, and to \cite{AY15A} for a comprehensive survey on QFAs.

A one-way quantum one-counter automaton (1Q1CA) is a 5-tuple
\[
M=(Q, \Sigma, \delta, q_0, Q_a),
\]
where $\delta: Q\times \Sigma\times \{Z, NZ\} \times Q\times \{-m, ..., m\}\longrightarrow \mathbb{C}$ is a transition function; $\delta(q, \sigma, z, q', c)=p$ means that
the transition from $q$ to $q'$ increasing the counter value by $c$  occurs with probability amplitude $p$
if the scanned symbol is $\sigma$ and the status of the counter value is $ z $.

$\ket{q,v}$ (resp., $\bra{q,v}$), called a ket (resp., bra), denotes the column (resp., row) vector where the entry corresponding to $(q,v)$ is one and the remaining entries are zeros. That is, $\{\ket{q,v}\}$ is an orthonormal basis of $l_2(Q\times\mathbb{Z})$.
For each $\sigma\in \tilde\Sigma$, we define a time evolution operator $U_\sigma$ as follows:
\[
U_\sigma\ket{q, v} = \sum\limits_{(q',c)\in Q\times \{-m, \ldots, m\}} \delta(q,\sigma, z(v), q', c)\ket{q',v+c},
\]
where $z(v)=Z$ (resp., $z(v)=NZ$) if $v=0$ (resp., $v\neq 0$). 
In order to be a well-formed automaton, $U_\sigma$ must be unitary.
The computation of a 1Q1CA is described by 
$\ket{\Psi}=U_{\tilde w(|\tilde w|)}U_{\tilde w(|\tilde w|-1)}\cdots U_{\tilde w(1)}\ket{q_0,0}$.
The following projective measurement $P$ is applied to $\ket{\Psi}$ at the end of the computation:
\[
P=\{P_a=\Sigma_{q\in Q_a, v\in \mathbb{Z}} \ket{q,v}\bra{q,v}, 
P_r=\Sigma_{q\not\in Q_a, v\in \mathbb{Z}} \ket{q,v}\bra{q,v}  \}.
\]
Then, we have ``$a$'' (resp., ``$r$'') with probability $\bra{\Psi}P_a\ket{\Psi}$
(resp., $\bra{\Psi}P_r\ket{\Psi}$).
The automaton accepts (resp., rejects) the input if we have ``$a$'' 
(resp., ``$r$'') as the outcome.

\section{New separation results on promise problems}

We start with the separation of exact quantum model from deterministic one and then we give the separation of  Las Vegas probabilistic model from deterministic one. 

\subsection{Separation of exact 1Q1CAs and 1D1CAs}

We show that there exists a promise problem that can be solved by 1Q1CAs exactly but not by any 1D1CAs. For our purpose, we evaluate XOR value of two comparisons. Let $ a $, $b$, $c$, and $d$ be four even positive numbers. Our first comparison is whether $ a = c $ and the second one is whether $ b=d $, and, our aim is to decide whether 
\[
\left(  \left(  a=c \right) \mathsf{XOR} \left( b=d \right)  	\right)
\]
is true or false. Remark that this expression takes the value of true if and only if exactly one of the comparisons fails.

In order to implement this decision procedure by 1Q1CAs, we give the numbers as $0^a\#0^b\#0^c\#0^d$.
However, due to some technical difficulties, we also append four more numbers as $\#0^{k_1}\#0^{k_2}\#0^{l_1}\#0^{l_2}$, which will help the automaton to set the counter to zero at the end of the computation so that an appropriate quantum interference can be done between the different configurations, i.e., two configurations having different counter values do not interfere. 

\newcommand{\xoreq}{\mathtt{XOR\mbox{-}EQ}}
\newcommand{\xoreqyes}{\mathtt{XOR\mbox{-}EQ_{yes}}}
\newcommand{\xoreqno}{\mathtt{XOR\mbox{-}EQ_{no}}}
Formally, we define our promise problem as follows. Let $\xoreq$ be the set of strings of the form $0^a\#0^b\#0^c\#0^d\#0^{k_1}\#0^{k_2}\#0^{l_1}\#0^{l_2}$
such that
$a, b, c$, and $d$ are even and satisfy the following:
\[
a-c+(-1)^{\delta_{a,c}}(k_1-k_2) = b - d + (-1)^{\delta_{b,d}}(l_1-l_2),
\]
where $\delta_{u,v}=1$ if $u=v$, and $\delta_{u,v}=0$ otherwise.
Then, the set $ \xoreq  $ is our promise. We define yes-instances ($\xoreqyes$) as the set of strings in $ \xoreq $ such that $\left((a=c) \mbox{ xor } (b=d)\right) $ takes the value of true. Then, no-instances ($\xoreqno$) are the ones taking the value of false, or equivalently $ \xoreq \setminus \xoreqyes $.

\begin{theorem}
	\label{theorem:qca_xor_eq}
	The promise problem $\xoreq$  can be solved by 1Q1CAs exactly.
\end{theorem}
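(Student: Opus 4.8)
The plan is to build a 1Q1CA that runs two comparison processes in quantum superposition and reads off the XOR by a single interference step at the right end-marker \$. Upon reading the left end-marker $\mathcent$ I would split the computation into an equal superposition of two \emph{branches} by putting a two-state branch register into $\frac{1}{\sqrt2}(\ket{0}+\ket{1})$; branch $0$ will test whether $a=c$ and branch $1$ whether $b=d$. Both branches scan the same input simultaneously (the head is one-way and synchronized), but each manipulates the shared counter only on its own blocks: branch $0$ increments on $0^a$, decrements on $0^c$, and ignores $0^b,0^d$, so that after $0^c$ the counter holds $a-c$; symmetrically, branch $1$ tracks $b-d$ after $0^d$, ignoring $0^a,0^c$. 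A finite block index kept inside the internal states lets the transition on the symbol $0$ act differently in each branch and each block.

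The central device is to use the auxiliary blocks $0^{k_1},0^{k_2}$ (for branch $0$) and $0^{l_1},0^{l_2}$ (for branch $1$) to force both branches to the \emph{same} final counter value. In branch $0$, at the separator following $0^c$ the counter status is $Z$ exactly when $a=c$; I would read this status and accordingly process $0^{k_1},0^{k_2}$ with the sign $(-1)^{\delta_{a,c}}$, so that branch $0$ finishes with counter value $a-c+(-1)^{\delta_{a,c}}(k_1-k_2)$, and symmetrically branch $1$ finishes with $b-d+(-1)^{\delta_{b,d}}(l_1-l_2)$. By the very definition of $\xoreq$ these two values coincide; call the common value $V$. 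Since two configurations interfere only when their counter values agree, this is exactly what enables a clean interference at the end. At the same two detection steps I would also attach a factor $(-1)^{\delta_{a,c}}$ to branch $0$ and $(-1)^{\delta_{b,d}}$ to branch $1$. If by the end both branches sit in a common configuration differing only in the branch label $q_f^{0}$ versus $q_f^{1}$ and in this phase, the state before the final step is
\[
\frac{1}{\sqrt2}\left((-1)^{\delta_{a,c}}\ket{q_f^{0},V}+(-1)^{\delta_{b,d}}\ket{q_f^{1},V}\right).
\]
Applying a Hadamard transform to the branch register at \$ sends this to $\pm\ket{q_f^{0},V}$ when $\delta_{a,c}=\delta_{b,d}$ and to $\pm\ket{q_f^{1},V}$ when $\delta_{a,c}\neq\delta_{b,d}$; declaring the branch-$1$ states accepting and the branch-$0$ states rejecting then solves $\xoreq$ with zero error, because $\xoreqyes$ is precisely the set on which the two comparisons disagree.

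Two points need care, and the second is the genuine obstacle. First, every $U_\sigma$ must be unitary: the increment, decrement and ``ignore'' actions are shift and identity operators, and the status-dependent choices together with the $-1$ phases are signed permutations of the configuration basis, so well-formedness reduces to checking block by block that the prescribed map on $\{\ket{\cdot,v}\}$ is a bijection, which is routine except at the boundaries where the transition depends on $z(v)$. Second, the final two-path interference is exact only if the two branches reach \emph{one and the same} configuration up to the recorded phase. Steering the sign of the $0^{k_1},0^{k_2}$ (resp.\ $0^{l_1},0^{l_2}$) processing forces me to remember, inside the internal state, whether $a=c$ (resp.\ $b=d$), and this equality bit must be removed reversibly before the Hadamard: if it is retained, the two branches end in distinct states and a short overlap computation shows that no measurement can then separate yes- from no-instances exactly. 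Disposing of this bit reversibly, exploiting the auxiliary blocks and the evenness of $a,b,c,d$ to re-synchronize each branch to its common state $q_f^{\,\cdot}$ once the counter arithmetic is complete, is the crux of the construction, and the remainder of the proof is devoted to exhibiting such a gadget and verifying that the resulting operators are unitary.
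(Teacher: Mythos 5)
There is a genuine gap, and it sits exactly at the point you yourself flag and then defer. Your architecture otherwise matches the paper's: the same counter bookkeeping gives both branches the common final value $a-c+(-1)^{\delta_{a,c}}(k_1-k_2)=b-d+(-1)^{\delta_{b,d}}(l_1-l_2)$, so the promise licenses interference at \$, and the XOR is to be read off by a final Hadamard. But the ``gadget'' that attaches the phase $(-1)^{\delta_{a,c}}$ and then ``disposes of the equality bit reversibly'' does not exist in the form you describe. To steer the sign of the $0^{k_1},0^{k_2}$ processing you must record $\delta_{a,c}$ in the internal state, say in states $s_{=}$ and $s_{\neq}$; your plan then needs the evolution over the remaining input to carry both $\ket{s_{=},v}$ and $\ket{s_{\neq},v}$ (up to sign) into the single configuration $\ket{q_f^0,v'}$. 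The transition function, however, depends only on the state, the scanned symbol, and the zero/non-zero status --- not on the promise or on reachability --- so each $U_\sigma$ (and hence any composition of them) must be unitary on all of $l_2(Q\times\mathbb{Z})$, and no unitary maps two orthogonal basis vectors to phases of one and the same basis vector. Giving the two states different counter updates only relocates the collision (e.g.\ $\ket{s_{=},v}\mapsto\ket{q_f^0,v+1}$ and $\ket{s_{\neq},v}\mapsto\ket{q_f^0,v-1}$ collide on the pair $\ket{s_{=},v}$, $\ket{s_{\neq},v+2}$), and a rotation that splits off a garbage state restores injectivity but leaks amplitude out of the interfering pair, forfeiting exactness. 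The evenness of $a,b,c,d$ does not rescue this; the paper uses evenness in the 1D1CA impossibility proof, not in the algorithm.

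The paper sidesteps erasure entirely via the phase-kickback form of Deutsch's algorithm \cite{CEMM98}: it builds a reversible sub-automaton $M_1$ deciding $a=c$ whose output is \emph{inverted} when started from a twin initial state $q'^1_1$ instead of $q^1_1$ (likewise $M_2$ with $q'^2_1$ for $b=d$), and runs each comparison as the minus superposition of the machine and its inverted twin, $\frac{1}{\sqrt2}\left(\ket{q^1_1,0}-\ket{q'^1_1,0}\right)$. Since both components perform identical counter updates, if $a=c$ this evolves to $\frac{1}{\sqrt2}\left(\ket{q^1_8,V}-\ket{q'^1_8,V}\right)$, and if $a\neq c$ to the same vector with an overall sign flip: the answer emerges as a relative phase on a final superposition that is outcome-independent, so there is no equality bit left to erase. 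The transition on \$ then interferes the two branch pairs (which share the counter value $V$ by the promise) and routes agreement versus disagreement of the phases deterministically to rejecting versus accepting states. Replacing your single terminal state per branch by such a two-state pair --- with the \#-transition after the comparison swapping the pair rather than tagging a phase onto a remembered bit --- is what turns your sketch into a complete proof.
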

\begin{proof}
	We can construct a one-way deterministic reversible one-counter automaton $M_1$,
	which is a special case of the 1Q1CA model,\footnote{A classical reversible operation defined on the set of  configurations is a unitary operator containing only 0s and 1s.} that decides whether $a=c$ as follows.
	\begin{enumerate}
		\item $M_1$ reads the first block $0^a$ and increases the counter
		by one at each transition. 
		\item $M_1$ skips the second block $0^b$.
		\item $M_1$ reads the third block $0^c$ and decreases the counter
		by one at each transition. At the end of this block, $M_1$ decides
		whether $a=c$ or not.
		\item $M_1$ skips the fourth block $0^d$.
		\item $M_1$ reads the fifth block $0^{k_1}$ and increases the counter
		by one if $a\neq c$ (decreases the counter by one if $a=c$) at each
		transition.
		\item $M_1$ reads the sixth block $0^{k_2}$ and decreases the counter
		by one if $a\neq c$ (increases the counter by one if $a=c$) at each
		transition.
		\item $M_1$ skips the seventh and the eighth blocks.
	\end{enumerate}
	Similarly, we can construct a 1Q1CA $M_2$ that decides whether $b=d$
	by comparing $b$ with $d$ using the counter and then the counter is set to zero after reading $0^{l_1}$ and $0^{l_2}$.
	We illustrate $M_1$ and $M_2$ in Figure~\ref{fig:subautomata}.
	
	\begin{figure}
		\begin{center}
			\includegraphics[scale=0.35]{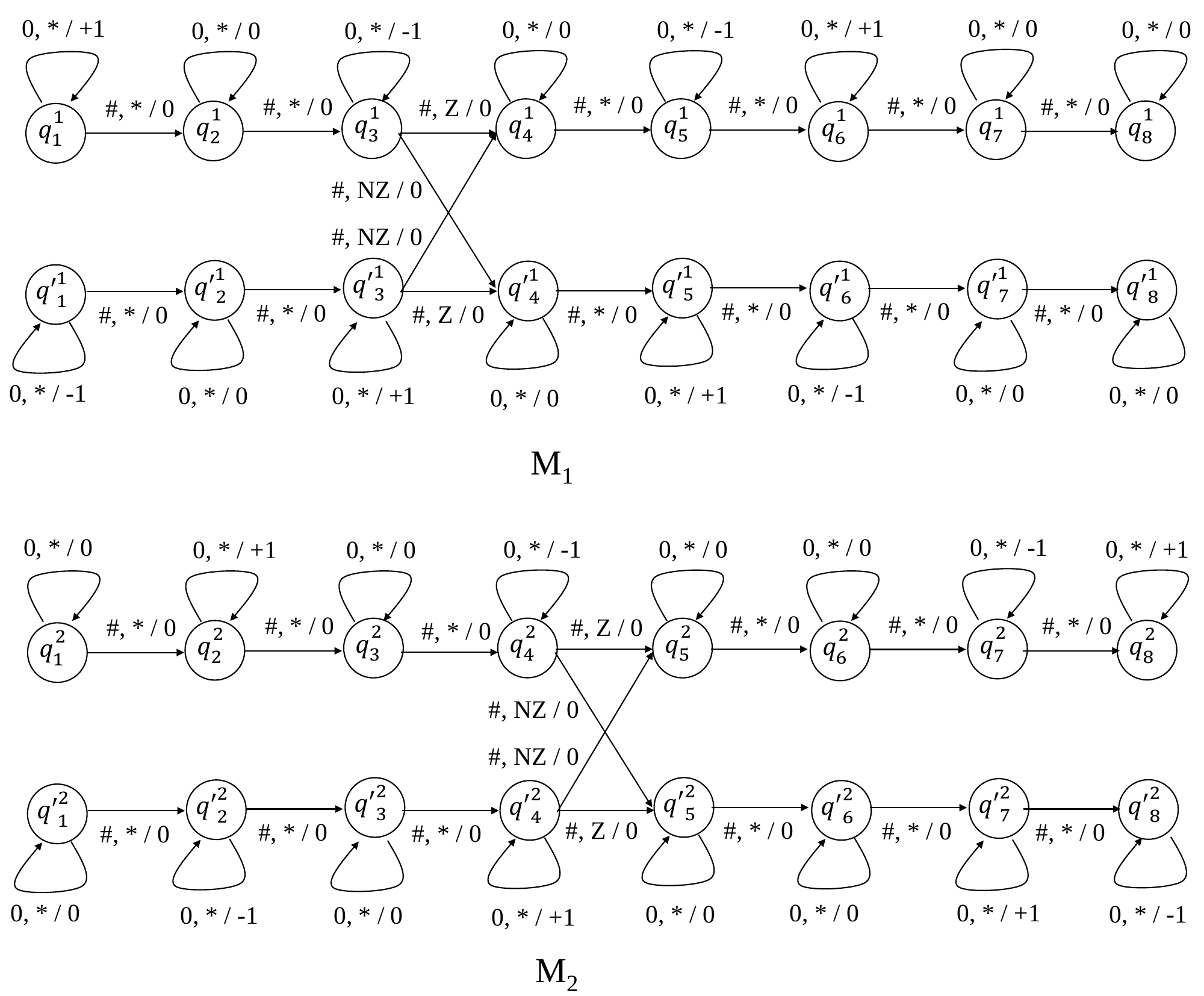}
		\end{center}
		\caption{Subautomata $M_1$ and $M_2$}
		\label{fig:subautomata}
	\end{figure}
	
	In the figure, each label of the edges is of the form $(\sigma, z / c)$, where $\sigma \in \Sigma$, $ z \in\{Z, NZ\}$, and $c \in \{-1, 0, +1\}$.
	A label $(\sigma, z / c)$ means that the transition occurs
	when the input symbol is $\sigma$ and the status of the counter value is $z$ ($*$ denotes a wild card which matches any of $Z$ and $NZ$), and the counter value is updated by $c \in \{-1, 0, +1\}$.
	The initial state is $q^1_1$/$q^2_1$ for $M_1$/$M_2$, respectively.
	The set of accepting states is $\{q^1_8\}$/$\{q^2_8\}$ for $M_1$/$M_2$, respectively. Also the set of rejecting
	states is $\{q'^1_8\}$/$\{q'^2_8\}$ for $M_1$/$M_2$, respectively.
	It is easy to see that if we set the initial state to $q'^1_1$ for $M_1$ ($q'^2_1$ for $M_2$),
	the output is inverted.
	
	We use the algorithm in \cite{CEMM98} (the improved version of Deutsch-Jozsa algorithm\cite{DJ92}) 
	to compute the exclusive-or exactly using the two sub-automata
	as the oracle for Deutsch's problem\cite{Deu85}.
	Note that the counter values are the same between $M_1$ and $M_2$ at the moment
	of reading the last input symbol. Thus, we can construct a 1Q1CA that solves $\xoreq$
	by simulating the improved Deutsch-Jozsa algorithm \cite{CEMM98} on it by running $M_1$ and $M_2$
	in a superposition, which is illustrated in 
	Figure~\ref{fig:DJ_algorithm}. In the figure, the value on each edge represents the amplitude
	associated with the transition. The first and the last transitions occur when it
	reads the left and the right end-markers, respectively. It is straightforward to see that the time evolution operators can be extended to unitary operators by adding dummy states and/or transitions.
\end{proof}

\begin{figure}
	\begin{center}
		\includegraphics[scale=0.30]{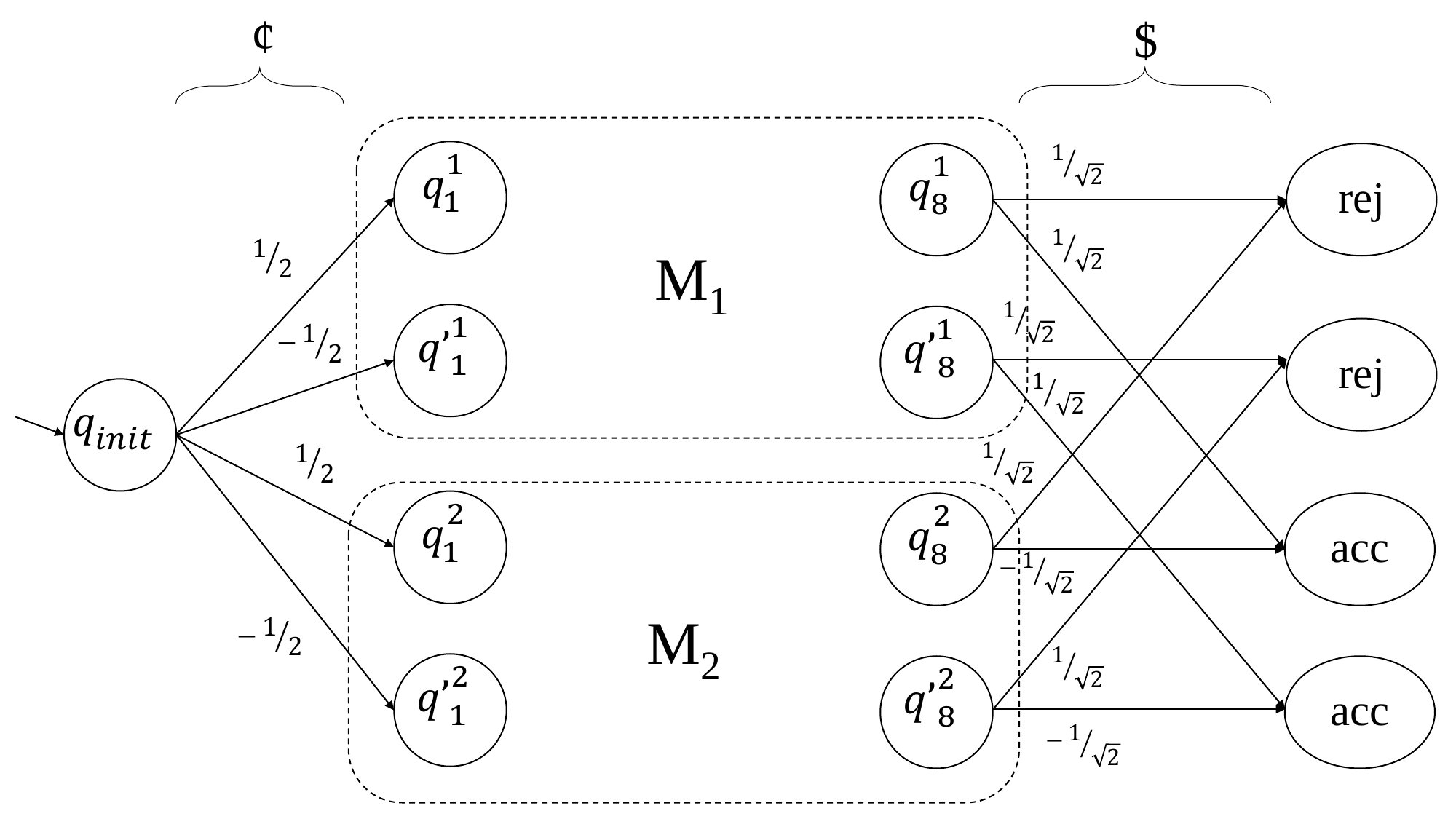}
	\end{center}
	\caption{Simulation of the Deutsch-Jozsa algorithm}
	\label{fig:DJ_algorithm}
\end{figure}

\begin{theorem}
	\label{theorem:impossibility_xor_eq}
	No 1D1CA can solve $ \xoreq $.
\end{theorem}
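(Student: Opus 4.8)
The plan is to argue by contradiction: suppose a 1D1CA $M=(Q,\Sigma,\delta,q_0,Q_a)$ solves $\xoreq$, and then exhibit a yes-instance and a no-instance on which $M$ finishes in the same state. The key simplification is that for a 1D1CA the acceptance depends only on the state reached after reading $\$$; the counter value at the end is irrelevant. Hence two instances that drive $M$ into the same final state are accepted or rejected identically, so a single such collision across $\xoreqyes$ and $\xoreqno$ already contradicts the assumption. The whole task therefore reduces to engineering one final-state collision.

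The main tool I would develop is the ultimate periodicity of $M$ on a unary block. While $M$ reads a block $0^n$ with the counter status constant (say $NZ$ throughout), the per-symbol action on the state is a fixed map $g\colon Q\to Q$, whose iterates are eventually periodic with tail $t\le |Q|$ and period $\pi\le|Q|$, and over one full period the counter changes by a fixed integer $\Delta$. Thus, once $n>t$ and as long as the counter never touches $0$ inside the block, the state reached depends only on $(q_{\mathrm{in}},\, n\bmod\pi)$ and the net counter displacement is affine in $n$. I would first push the computation into this regime by taking all blocks long, tracking on each block whether $M$'s counter drifts up or down, and using the freedom the promise leaves in $k_1,k_2,l_1,l_2$ to keep the counter away from $0$ except at the few positions where a genuine zero-test occurs. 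The bounded-counter case, in which the entire configuration $(q,v)$ stays in a finite window and is itself eventually periodic, I would dispose of separately by a pigeonhole on configurations.

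With periodicity established, I would set up a family of promise inputs in which the two comparisons are toggled independently: put $a=A$, $c=A+x$, $b=B$, $d=B+y$ with $A,B$ large, so that $a=c\iff x=0$ and $b=d\iff y=0$, and in each case pick $k_1,k_2,l_1,l_2\ge 0$ to satisfy the single linear promise equation. The XOR-value is true exactly when precisely one of $x,y$ vanishes. By the block lemma the final state is an eventually-periodic function of the length parameters, so the goal becomes to locate, inside this family, an input with exactly one of $x,y$ zero (a member of $\xoreqyes$) and an input with both or neither zero (a member of $\xoreqno$) that $M$ cannot distinguish.

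The hard part is exactly the zero-tests. A deterministic counter can test for $0$, and it is precisely these tests that let $M$ carry out a comparison, so the real content of the proof is that a single counter reading the crossing layout $0^a\#0^b\#0^c\#0^d$ can hold only one linear combination of $a,b,c,d$ at a time and can therefore realize at most one of the two equalities as a detectable zero-crossing. I would make this quantitative by showing that the sequence of zero-status bits $M$ observes, together with its state, depends on the input only through a bounded collection of affine thresholds; the sign flips $(-1)^{\delta_{a,c}}$ and $(-1)^{\delta_{b,d}}$ in the promise are then used to choose auxiliary lengths $k_1,k_2,l_1,l_2$ that make the counter trajectory, and hence every zero-status bit and the final state, agree on a selected yes-instance and a selected no-instance. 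This forced agreement is the collision that contradicts the assumption, and getting the auxiliary lengths to simultaneously satisfy the promise and align the trajectories in both the first and the second comparison is the step I expect to be the principal obstacle.
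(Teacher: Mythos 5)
Your proposal correctly reduces the problem to forcing one final-state collision between a yes-instance and a no-instance, and it correctly identifies the freedom in $k_1,k_2,l_1,l_2$ as the lever for keeping both instances inside the promise. But the engine you propose for producing the collision is never actually built, and you say so yourself: the claim that the zero-status bits $M$ observes ``depend on the input only through a bounded collection of affine thresholds,'' and the claim that a single counter ``can realize at most one of the two equalities as a detectable zero-crossing,'' are asserted, not proved, and you flag the trajectory-alignment step as the principal obstacle. This is a genuine gap, not a routine verification. Worse, the periodicity machinery it rests on is shakier than you suggest: your block lemma assumes the counter status stays $NZ$ throughout a block so that the per-symbol state map $g\colon Q\to Q$ is fixed, but a deterministic 1D1CA can deliberately hover at counter values $0$ and $1$, hitting zero on every other symbol, so a block need not decompose into a bounded number of constant-status segments, and ``keep the counter away from $0$ except at the few positions where a genuine zero-test occurs'' is not something the adversary controls --- the automaton chooses its own trajectory. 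Finally, your family with two independent toggles $x,y$ over-constrains the construction: because of the sign flips $(-1)^{\delta_{a,c}}$ and $(-1)^{\delta_{b,d}}$, whether an instance satisfies the promise equation depends on which equalities hold, so choosing one $(k_1,k_2,l_1,l_2)$ that keeps both a yes- and a no-instance promised \emph{and} aligns their full counter trajectories is exactly the hard part you have deferred.

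The paper avoids all trajectory analysis with a counting argument you are missing. Among inputs of length less than $n$, there are $\Theta(n^2)$ prefixes of the form $0^a\#0^b\#$ (with $a,b$ even), but $M$ has only $O(n)$ reachable configurations $(q,v)$, since $|Q|$ is constant and the counter moves by at most a constant per step. By pigeonhole, two distinct prefixes $0^a\#0^b\#$ and $0^{a'}\#0^{b'}\#$ lead to the \emph{same configuration} --- state and counter value --- after which appending one \emph{common} suffix $0^c\#0^d\#0^{k_1}\#0^{k_2}\#0^{l_1}\#0^{l_2}$ makes the two computations identical step by step, zero-tests included, with no periodicity lemma needed. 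The only remaining work is arithmetic: in the case $a\neq a'$ one takes $c=a$ and picks $d$ even with $b\neq d$, $b'\neq d$ via
\[
d=\frac{b+b'+a-a'}{2}+(l_1-l_2),
\]
then sets $k_1,k_2$ so that $-(k_1-k_2)=b-d+(l_1-l_2)$; both completions then satisfy the promise, the first ($a=c$, $b\neq d$) is a yes-instance and the second (neither equality) is a no-instance. The case $a=a'$, $b\neq b'$ is symmetric with $d=b$. Note how this sidesteps your coupling problem: membership in yes versus no is decided by \emph{which prefix} precedes the shared suffix, so a single choice of $(k_1,k_2,l_1,l_2)$ suffices. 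If you want to rescue your write-up, replace the dynamics analysis entirely with this prefix-counting step; as it stands, your argument's load-bearing lemma is the unproved one.
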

\begin{proof}
	We assume that there exists a 1D1CA $M$ that solves $\xoreq$. Note that $M$ can have at most $O(n)$ possible configurations for a string whose length is less than $n$, i.e., a constant number of possible states with $O(n)$ possible counter values. Also note that there are $\Theta(n^2)$ possible partial inputs of the form $0^a\#0^b\#$ whose length is less than $n$. Thus, there exist two distinct partial inputs 
	\[
	0^a\#0^b\# \mbox{ and } 0^{a'}\#0^{b'}\#
	\] 
	such that  the configurations after reading them are the same. We will show that there exists a suffix string, 
	\[
	0^c\#0^d\#0^{k_1}\#0^{k_2}\#0^{l_1}\#0^{l_2},
	\]
	such that either
	\[
	u_1 = 0^a\#0^b\#0^c\#0^d\#0^{k_1}\#0^{k_2}\#0^{l_1}\#0^{l_2}
	\] is a yes-instance and
	\[
	u_2=\allowbreak 0^{a'}\#0^{b'}\#0^c\#0^d\#\allowbreak 0^{k_1}\#0^{k_2}\#\allowbreak 0^{l_1}\#0^{l_2}
	\] is a no-instance, or, vice versa. However, $M$ cannot distinguish $ u_1 $ and $u_2$ since the two configurations after reading $0^a\#0^b\#$ and $0^{a'}\#0^{b'}\#$, respectively, are the same.
	This is a contradiction.
	
	Now, we show how to obtain $ u_1 $ and $ u_2 $ as desired. 
	
	We start with the case of $a\neq a'$. We set $l_1$ and $l_2$ to some values providing that
	\[
	d=\frac{b+b'+a-a'}{2}+(l_1-l_2)
	\]
	is even (this is possible since $a$, $b$, $a'$, and $b'$ are even) and $b\neq d$ and $b'\neq d$. Then, we set $k_1$ and $k_2$ to such values providing that
	\[
	-(k_1-k_2)=b-d+(l_1-l_2).
	\]
	Thus, both $ u_1 $ and $ u_2 $, i.e.
	\[
	u_1 = 0^a\#0^b\#0^a\#0^d\#0^{k_1}\#0^{k_2}\#0^{l_1}\#0^{l_2}
	\] 
	and  
	\[
	u_2 = 0^{a'}\#0^{b'}\#0^a\#0^d\#0^{k_1}\#0^{k_2}\#0^{l_1}\#0^{l_2},
	\] become promised input strings since 
	\[
	-(k_1-k_2)=b-d+(l_1-l_2) \mbox{ and } a'-a+(k_1-k_2) = b'-d + (l_1-l_2).
	\]
	In this setting, the former one is a yes-instance and the latter one is a no-instance. 
	
	In the following, we show how to obtain the desired $u_1$ and $u_2$ when $a=a'$. Note that, in this case,  $ b\neq b'$. 
	
	We set $k_1$ and $k_2$ to some values providing that 
	\[
	c=\frac{a+a' + b-b'}{2}+(k_1-k_2)
	\]
	is even (this is possible since $a$, $b$, $a'$, and $b'$ are even) and $a\neq c$ and $a'\neq c$. Then, we set $l_1$ and $l_2$ providing that\
	\[
	a-c+(k_1-k_2) = -(l_1-l_2).
	\] 
	Thus,  both $ u_1 $ and $ u_2 $, i.e., 
	\[
	u_1 = 0^a\#0^b\#0^c\#0^b\#0^{k_1}\#0^{k_2}\#0^{l_1}\#0^{l_2}
	\]
	and
	\[
	u_2 = 0^{a'}\#0^{b'}\#0^c\#0^b\#0^{k_1}\#0^{k_2}\#0^{l_1}\#0^{l_2},
	\]
	become promised input strings since 
	\[
	a-c+(k_1-k_2) = -(l_1-l_2) \mbox{ and } a'-c+(k_1-k_2) = b'-b + (l_1-l_2).
	\] 
	In this setting, again the former one is a yes-instance and the latter one is a no-instance. 
\end{proof}

\subsection{Separation of Las Vegas 1P1CAs and 1D1CAs}

\newcommand{\one}{ \mathtt{ONE} }
\newcommand{\none}{ \mathtt{NONE} }
\newcommand{\onenone}{ \mathtt{ONE\mbox{-}NONE} }
\newcommand{\onenoneyes}{ \mathtt{ONE\mbox{-}NONE_{yes}} }
\newcommand{\onenoneno}{ \mathtt{ONE\mbox{-}NONE_{no}} }

We show that there exists a promise problem that Las Vegas 1P1CAs can solve but 1D1CAs cannot. Our idea is inspired from \cite{RY14A}. 

Let $|w|_a$ be the number of occurrences of the symbol $a$ in the string $w$. We define the sets $ \one $ and $  \none$ as follows. All strings from $\one \cup \none$ have the form $uy$, where $u \in \{a,b,c\}^*,$ $y \in \{d\}^*$, and $|y| \geq |u|$. Moreover, for any string $u y \in \one $, the number of symbols is equal for exactly one pair of $(a,b)$, $(b,c)$, or $(c,a)$, i.e., $|u|_\alpha = |u|_\beta $ for exactly one pair $(\alpha, \beta)\in \{(a,b), (b,c), (c,a)\}$. Also, for any string $u y \in \none $, the number of symbols is equal for none of the pair of $(a,b)$, $(b,c)$, or $(c,a)$, i.e., $|u|_\alpha \neq  |u|_\beta $ for any pair $(\alpha, \beta)\in \{(a,b), (b,c), (c,a)\}$. 

We define a promise problem $ \onenone $, where $ \onenoneyes$ (composed by yes instances) is formed by the concatenation $ \one \cdot \none $ and $\onenoneno $ (composed by no instances) is formed by the concatenation $ \none \cdot \one $. 

\begin{theorem}
	The promise problem $  \onenone $ can be solved by a Las Vegas 1P1CA $P$ with success probability $ p=\frac{1}{3} $.
\end{theorem}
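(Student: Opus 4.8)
The plan is to let $P$ begin by making a single three-way probabilistic choice, selecting one of the three pairs $(a,b)$, $(b,c)$, $(c,a)$, each with probability $\frac{1}{3}$, and then to run a deterministic sub-automaton dedicated to the chosen pair. The guiding observation is that every promised input is a concatenation $s_1 s_2$ in which exactly one of $s_1,s_2$ lies in $\one$ (has a unique balanced pair) while the other lies in $\none$ (has no balanced pair); a yes-instance puts the $\one$-part first and a no-instance puts it last. Since $u_i \in \{a,b,c\}^*$ and $y_i \in \{d\}^*$, the type-changes of the input symbols delimit the four blocks $u_1,y_1,u_2,y_2$ unambiguously (the membership conditions force each $u_i$, and hence each $y_i$, to be nonempty, because an empty $u$ balances all three pairs at once and so belongs to neither set), so $P$ can track in its finite control which block it is currently scanning.

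For the chosen pair, say $(a,b)$, the plan is to use the counter to test separately whether $\#_a=\#_b$ in $u_1$ and whether $\#_a=\#_b$ in $u_2$. While reading $u_1$, $P$ increments the counter on each $a$, decrements on each $b$, and ignores $c$; upon reading the first symbol of $y_1$ it inspects the $Z/NZ$ status of the counter and records in its finite control the single bit ``$u_1$ is balanced for the chosen pair''. It then performs the same test on $u_2$ and records the analogous bit. The final decision, taken at the right end-marker \$, is: accept if $u_1$ is balanced and $u_2$ is not, reject if $u_2$ is balanced and $u_1$ is not, and halt in a neutral (``don't know'') state otherwise.

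The main obstacle is that the single counter must be reused: after measuring the difference across $u_1$ it must be driven back to zero before it can measure $u_2$, and this is precisely where the promise condition $|y|\ge|u|$ enters. The plan is to also maintain, as a finite-state trit, the sign of the running difference; this is feasible because the counter changes by $\pm 1$ at each step and the $Z/NZ$ status reveals exactly when the difference crosses $0$, so the finite control can update the trit correctly at every move. Knowing the sign at the end of $u_1$, $P$ then uses the symbols of $y_1$ to move the counter one step toward zero per $d$ until the status becomes $Z$, after which it skips the remaining $d$'s. Since the magnitude of the difference is at most $|u_1|\le|y_1|$, the block $y_1$ always contains enough $d$'s to complete the reset.

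Finally, I would verify correctness together with the Las Vegas guarantee. On a yes-instance, $s_2\in\none$ forces the chosen pair to be unbalanced in $u_2$, so $P$ never rejects, and it accepts exactly when the chosen pair is the unique balanced pair of $s_1\in\one$, which happens with probability $\frac{1}{3}$; symmetrically, on a no-instance $P$ never accepts and rejects with probability $\frac{1}{3}$. The degenerate outcome ``balanced on both sides'' cannot occur on promised inputs, so the accepting and rejecting probabilities are never simultaneously positive, no answer is ever wrong, and the success probability is $p=\frac{1}{3}$, as claimed.
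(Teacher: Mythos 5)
Your proposal is correct and takes essentially the same approach as the paper: a three-way probabilistic split over the pairs $(a,b)$, $(b,c)$, $(c,a)$, counter-based equality tests on $u_1$ and $u_2$, and a reset of the counter to zero while reading $y_1$, which is exactly what the promise $|y|\geq|u|$ is there to enable. The only differences are minor: the paper splits into three paths for $u_1$ and then re-splits each non-accepting path into three fresh paths for $u_2$, whereas you make a single split and reuse the chosen pair on both halves (valid because the $\none$-half is unbalanced for every pair, so ``balanced on both sides'' cannot occur), and your explicit sign-tracking trit for driving the counter back to zero spells out a mechanical detail the paper leaves implicit.
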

\begin{proof}
	Let $ u_1 y_1 u_2 y_2 $ be a promised input, where $ u_1 \subseteq  \{a,b,c\}^* $, $ y_1 \subseteq \{d\}^* $, $u_2 \subseteq \{a,b,c\}^* $, and $y_2 \subseteq \{d\}^* $. The details of $P$ are as follows. At the beginning, the computation splits into 3 different paths with equal probabilities and each path compares a pair ($(a,b)$, $(b,c)$, or $(c,a)$) in $u_1$. If one of them is succeeded, the input is accepted in that path. All non-accepting paths set their counter to zero by reading $ y_1 $, and then immediately each of them splits into three new different paths with equal probability. Each subpath compares a pair ($(a,b)$, $(b,c)$, or $(c,a)$) in $u_2$. If one of them is succeeded, the input is rejected. Otherwise, $ P $ says ``don't know''.
	
	
	If the input is a yes instance, then the numbers of symbols are equal only for a single pair of $ u_1 $. Then the input is accepted with probability $ \frac{1}{3} $ in one of the first three paths, and the computation ends in a neutral state in all the other cases. Similarly, if the input is a no instance, then it is rejected with probability $ \frac{1}{3} $ and the automaton says ``don't know'' with probability $ \frac{2}{3} $.
\end{proof}

\newcommand{\oneornone}{ \mathtt{ONE\textbf{OR}NONE} }

\newcommand{\onenonet}{ \mathtt{ONE\mbox{-}NONE(t)} }
\newcommand{\onenonetyes}{ \mathtt{ONE\mbox{-}NONE_{yes}(t)} }
\newcommand{\onenonetno}{ \mathtt{ONE\mbox{-}NONE_{no}(t)} }

To get a better error bound, we can use the promise problem $ \onenonet $, where yes-instances ($\mathtt{ONE\mbox{-}}$ $\mathtt{NONE_{yes}(t)}$) are formed by $ (\onenoneyes)^t $ and no-ins\-tan\-ces ($\onenonetno$) are formed by $ (\onenoneno)^t $. That is, the error bound can be reduced to $ (\frac{2}{3})^t $  for 1P1CAs, where $ t>1 $.


\begin{theorem}
	\label{onenone}
	The promise problem $  \onenonet $ can be solved by a Las Vegas 1P1CA $P$ with success probability  $ p=1-(\frac{2}{3})^t $.
\end{theorem}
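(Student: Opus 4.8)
The plan is to build the Las Vegas machine for $\onenonet$ by iterating the single-round procedure from the previous theorem $t$ times in sequence, and to compute the resulting success probability by a direct analysis of when each round commits to an answer. First I would observe that a string in $\mathtt{ONE\mbox{-}NONE(t)}$ is a concatenation of $t$ promised blocks, each of which is itself an instance of $\onenone$; crucially, every block is a yes-instance (if the whole input is a yes-instance) or every block is a no-instance (if the whole input is a no-instance), so all $t$ sub-decisions are consistent and no block can ever suggest the wrong answer. This consistency is what keeps the construction Las Vegas: the machine never has to reconcile conflicting signals.

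The construction I would give runs the single-round algorithm on the first $\onenone$ block. If that round produces an answer (``accept'' for a yes-instance, ``reject'' for a no-instance), the machine halts with that decision; this happens with probability $\frac{1}{3}$ by the previous theorem. If the round ends in a neutral (``don't know'') state — probability $\frac{2}{3}$ — the machine instead resets its counter to zero using the trailing $\{d\}^*$ padding of the current block (the condition $|y|\ge|u|$ guarantees enough $d$'s to zero out any counter value built up while comparing the pairs), and then proceeds to run the same single-round procedure on the next $\onenone$ block. Iterating this over all $t$ blocks, the machine fails to commit only if every one of the $t$ independent rounds returns ``don't know,'' which occurs with probability $\left(\frac{2}{3}\right)^t$. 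Hence the success probability is $p = 1 - \left(\frac{2}{3}\right)^t$, and since each committing round answers correctly (accept only on yes-instances, reject only on no-instances) and the machine never gives a wrong answer, $P$ is a genuine Las Vegas 1P1CA.

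The main obstacle I expect is not the probability bookkeeping, which is a routine product of independent failure probabilities, but the mechanical verification that a single \emph{one-way} one-counter automaton can both reset its counter between blocks and sequence $t$ copies of the sub-procedure while remaining well-defined. The counter reset relies on the padding inequality $|y|\ge|u|$ within each block, so I would make explicit that after comparing a pair the accumulated counter magnitude is at most $|u|$ and can be neutralized by reading at most $|u|\le|y|$ of the $d$ symbols. The transition into the next round must be triggered deterministically upon detecting the block boundary, and the fresh three-way probabilistic split at the start of each round must be set up so the rounds are genuinely independent; I would note that since every path that reaches round $i$ is already in a neutral state with counter zero, the rounds compose cleanly and the overall construction stays within a single 1P1CA with a constant number of states.
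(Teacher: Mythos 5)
Your proposal is correct and is essentially the paper's own proof: the paper likewise runs the single-round $\onenone$ procedure on each of the $t$ promised blocks in sequence, continuing to the next block whenever a round ends in ``don't know,'' which gives failure probability $\left(\frac{2}{3}\right)^t$ and hence success probability $p=1-\left(\frac{2}{3}\right)^t$. Your additional remarks --- that the block consistency (all blocks yes or all blocks no) preserves the Las Vegas property, and that the trailing $\{d\}^*$ padding with $|y|\geq|u|$ lets each neutral path reset its counter to zero before the next round --- are details the paper leaves implicit, and they are stated correctly.
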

\begin{proof}
	The details of $P$ are the following. Let $ w=w_1\cdots w_{2t} $ be a promised  input, where for all $i=1, \dots, t$, either $w_{2i-1}\in \one $ and $ w_{2i}\in \none$ or $w_{2i-1}\in \none $ and $ w_{2i}\in \one$. For each part $w_{2i-1}w_{2i}=u_{2i-1}y_{2i-1}u_{2i} y_{2i}$ ($i=1, \dots, t$) of the input, the automaton applies the same strategy as in the previous theorem, where $ u_{2i-1} \subseteq  \{a,b,c\}^* $, $ y_{2i-1} \subseteq \{d\}^* $, $u_{2i} \subseteq \{a,b,c\}^* $, and $y_{2i} \subseteq \{d\}^* $. If the automaton comes to decision ``don't know'', it continues with the next pair until the end.
\end{proof}


Now, we show that neither $ \onenone $ nor $\onenonet$ can be solved by 1D1CAs. We start with the proof for $ \onenone $, which forms the base for the proof of $ \onenonet $.

\begin{theorem}
	\label{thm_onenone} 
	There is no 1D1CA that can solve promise problem  $\onenone$.  
\end{theorem}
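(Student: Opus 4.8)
The plan is to argue by contradiction, reusing the counting idea behind Theorem~\ref{theorem:impossibility_xor_eq} but matching configurations at a later point of the input. Assume a 1D1CA $M$ solves $\onenone$. Recall that for a 1D1CA acceptance is decided by the final state alone, so two inputs on which $M$ halts in the same state receive the same answer; in particular, if two strings drive $M$ to a common configuration and are then followed by the same suffix, $M$ answers identically on both. A natural first attempt is to collide two first factors $w_1\in\one$ and $w_1'\in\none$ and append a common second factor. This fails: a yes-instance needs a $\none$ second factor while a no-instance needs a $\one$ second factor, so no single shared second factor turns the two completions into a yes/no pair, and the mismatched completions fall outside the promise, where $M$ is unconstrained. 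Hence I will instead collide partial inputs that already contain the letters (but not the trailing $d$-block) of the second factor, and share only the final block of $d$'s.

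Concretely, write a promised input as $u_1 y_1 u_2 y_2$ with $u_1,u_2\in\{a,b,c\}^*$ and $y_1,y_2\in\{d\}^*$, where $u_1y_1$ and $u_2y_2$ are the two factors. The required decision is ``accept iff $u_1y_1\in\one$'' (equivalently iff $u_2y_2\in\none$). I will exhibit a \emph{yes-partial} $Y=u_1y_1u_2$, with $u_1y_1\in\one$ and $u_2$ having no balanced pair, together with a \emph{no-partial} $N=u_1'y_1'u_2'$, with $u_1'y_1'\in\none$ and $u_2'$ having exactly one balanced pair, such that $M$ reaches the same configuration after reading $Y$ and after reading $N$. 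Appending a single block $y_2=d^L$ with $L\ge\max(|u_2|,|u_2'|)$ (so that $|y_2|\ge|u_2|$ holds in both factors) then produces a genuine yes-instance $Yd^L\in\onenoneyes$ and a genuine no-instance $Nd^L\in\onenoneno$ on which $M$ follows the identical suffix from the identical configuration and hence halts in the same state, the desired contradiction.

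It remains to produce the colliding pair, and this is the heart of the argument. The key structural fact is that the counter stores only a single integer, whereas the type of a factor (which, if any, of the three pairs is balanced) is a two-dimensional condition on the triple of letter counts. To exploit this I will use the eventual regularity of a deterministic one-counter machine on a block of identical symbols: while the counter stays in the nonzero regime the state runs through a fixed cycle and the counter changes by a fixed net amount per period, so after reading such a block the state is an eventually periodic function of the block length and the counter value is an eventually affine function of it. Applying this to the blocks of $Y$ and of $N$ shows that, once all block lengths are large and lie in fixed residue classes, the reached configuration is governed by a bounded residue together with a single linear functional of the counts of $u_1$ and $u_2$. Such a one-dimensional readout cannot separate the ``exactly one balanced pair'' region from the ``no balanced pair'' region, because each region contains points realising every sufficiently large value of the functional in each residue class; choosing a yes-partial and a no-partial on which the functional and the residues agree yields the required configuration collision.

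The main obstacle I anticipate is making this last step fully rigorous. One must keep the counter away from zero throughout each block so that the eventual-affine description is valid, handling both the transient and the $Z/NZ$ branching that occurs whenever the counter passes through zero; and one must check that the count triples chosen to equalise the functional can simultaneously be forced into the prescribed type regions (exactly one balanced pair versus none for the respective factors) while respecting the promise constraint $|y|\ge|u|$ in each factor. Once these bookkeeping points are settled, the cross-type collision---and with it the contradiction---follows.
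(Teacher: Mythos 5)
Your overall strategy---pump unary blocks, exploit eventual periodicity of a deterministic one-counter machine on repeated symbols, and collide a $\one$-type partial input with a $\none$-type one before a shared final $d$-block---is in the same spirit as the paper's proof, but the heart of the argument is missing, and the one claim carrying all the weight is false as stated. You assert that after fixing residues of the block lengths, the reached configuration is ``governed by a bounded residue together with a single linear functional of the counts,'' and that each type region realises every sufficiently large value of that functional, so a cross-type \emph{configuration} collision always exists. Neither half is established. The per-period counter drift is a property of the particular cycle entered, which depends on the state at the start of each block and hence on the earlier blocks, so there is no single functional; worse, drifts can vanish. Concretely, if the drift $r_b$ on the $b$-cycle is nonzero while the drift $r_d$ on the $d$-cycle is zero, pumping cannot equalise the counter values of the two partial inputs at all: this is exactly the case the paper is forced to treat separately, and it is why the paper's Lemma \ref{lemma1} guarantees only the disjunction of an exact configuration collision (A-type) or the strictly weaker outcome of equal states with counter values in $\omega(n)$ (B-type). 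Your scheme demands the A-type outcome unconditionally and has no fallback, so the proof would fail against a machine exhibiting the $r_b\neq 0$, $r_d=0$ behaviour. Similarly, the regime where the counter stays bounded on a unary block ($|v(\sigma^n)|\in O(1)$, the paper's Case~1) is not covered by your eventually-affine description and needs a separate pigeonhole argument on configurations.

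Ironically, the B-type outcome would in fact suffice for your endgame---equal states with counters large enough never to hit zero during the shared suffix $d^L$ force $M$ through identical state transitions to the same final state, hence the same verdict, which is precisely how the paper closes that case---but you neither notice this escape hatch nor prove that even a state-level collision is achievable. What is missing is exactly the content of the paper's Lemmas \ref{lemma}, \ref{lemma_synch} and \ref{lemma1}: the explicit construction $u_1=x_1x_2x_3x_4$, $u_2=y_1y_2y_3y_4$ with its case analysis on $r_b$ and $r_d$, including the synchronization argument of Lemma \ref{lemma_synch} and the re-choice $k_2=k_1+t_b\cdot r_d$ when $r_b$ is not a multiple of $r_d$, followed by the two-stage assembly of the full yes/no pair. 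You candidly flag this step as the anticipated obstacle, and your diagnosis of where the difficulty lies is accurate, but flagging it does not discharge it: as written, the proposal is a proof plan whose central lemma is unproved and, in the form you state it, not true.
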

\begin{proof}
	Let us  prove by contradiction and assume that there exists a 1D1CA $M$ that solves $\onenone$. 
	We call two promised inputs  $w $ and $ w'$  non-equivalent with respect to $\onenone$ if either $ w \in \onenoneyes$ and $ w' \in \onenoneno$ or   $ w \in \onenoneno$ and $ w' \in \onenoneyes$. 
	
	We will have a contradiction with our assumption if we show that there exist two strings $w $ and $w' $ that are non-equivalent  with respect to $\onenone$  such that $M$ finishes reading $w$ and $w'$ with the same state and with the same status of counter.
	
	In the proof, we use the following notation. We denote by $c(w)$, $q(w)$, and $v(w)$  the configuration, the state, and the value of the counter of $M$ after reading the partial input  $w$, respectively, and by $\sigma$ an arbitrary symbol of input alphabet. If the automaton reaches the configuration $c'$ from the configuration $c$ when reading $w$, we denote it $c\xrightarrow{w} c'$. Let $m$ be the maximum value by which $M$ may change the value of counter in one step.

	\begin{lemma}
		\label{lemma}
		Let $c(w)=(q, v)$ be the configuration after reading a string $w$. If $M$ starts from $c(w)$ and, for $n\geq|Q|$, all  $v(w), v(w \sigma), v(w\sigma^2), \dots, v(w\sigma^n)$ are non-zero, then  the following is true:
		\begin{enumerate}
			\item there exist $n_1$ and $ n_2$ ($0 \leq n_1<n_2\leq n $ and $ n_1 < |Q|$) such that $q(w\sigma^{n_1})=q(w\sigma^{n_2})$;
			\item there exist numbers $t$ and $r$ ($0 < t\leq|Q| $ and $ 0 \leq |r| \leq m\cdot|Q|$) such that $M$ moves cyclically through some states $q_{i_1}, \ldots, q_{i_t}$ returning to the same state after every $t$ steps, and the value of the counter is changed by the same number $r$ after every $t$ steps as long as the value of the counter is not zero.  
		\end{enumerate}
	\end{lemma}
	\begin{proof}
		Both statements follow from the Pigeon-hole  principle. If the status of counter is the same, then $M$, reading only $ \sigma $'s, is simply a unary automaton and so it always enters a cycle of states after reading more than $ |Q| $ symbols. Thus the first statement is immediate. 
		
		
		We pick the smallest $ n_1 $ and $n_2$ ($0 \leq n_1 < n_2 \leq n $) such that all $q(w),$ $q(w\sigma),$ $\dots,$ $q(w\sigma^{n_1}),$ $\dots,$ $q(w\sigma^{n_2-1})$ are different and  $q(w\sigma^{n_1})=q(w\sigma^{n_2})$. Then the number $t=n_2-n_1$  is the period of cycle  ($M$ moves cyclically from one state to the next reading $\sigma$ and returns to the same state after every $t$ steps as long as the value of the counter is non-zero). Since after each $t$ steps $M$ is in the same state,  the  counter is changed  by the same value after every $t$ steps as long as the value of the counter is not zero.  
	\end{proof}
	
	Let $c=(q,v)$ be the configuration as given in the lemma. Now we focus on a computation on $M$ reading only $\sigma$'s before the counter hits zero. We  call $t$ and $r$ the period and the difference of the cycle, respectively. Without loss of generality, we assume that $v>0$. 
	The set $Q$ of states can be divided into disjoint subsets $Q^\sigma_1, \ldots, Q^\sigma_k$ $ ( Q^\sigma_1 \cup \dots \cup Q^\sigma_k=Q )$, where two states $q $ and $ q'$ belong to the same subset $Q^\sigma_j$ iff $M$ moves from $q$ and $q'$ to the same cycle reading $\sigma$'s. We call such partition ${\cal Q}^\sigma=\{Q^\sigma_1, \ldots, Q^\sigma_k\}$ of the set $Q$ as $\sigma$-partition. From Lemma \ref{lemma}, we have that each cycle (and hence each subset from $\sigma$-partition) has two characteristics: its period $t$ and its difference $r$.

	Let $c_1=(q_1,v_1)$ and $c_2=(q_2,v_2)$ be two different configurations. We will say that $c_1$ and $c_2$ are $\sigma$-synchronized if there exists some configuration $c$ and the numbers  $n_1$ and $ n_2 \geq 0$ such that $c_1 \xrightarrow{\sigma^{n_1}} c$ and $c_2 \xrightarrow{\sigma^{n_2}} c$. 
	
	\begin{lemma}
		\label{lemma_synch}
		Let $c=(q,v)$ and $c'=(q,v')$ be two different configurations  with the same state such that $v$ and $ v'$ have the same sign, $|v|,|v'| > m \cdot |Q|$ ($m$ is the maximum value by which the counter can be increased during one step), and $|v-v'|$ is a multiple of $r_\sigma$, where $r_\sigma$  is the difference  of the subset  from $\sigma$-partition ${\cal Q}^\sigma$  that contains $q$. Then $c $ and $c'$ are $\sigma$-synchronized.
	\end{lemma}
	\begin{proof}
		Let $Q^\sigma_j$ be the subset from $\sigma$-partition ${\cal Q}^\sigma$ that contains $q.$ From the definition of $\sigma$-partition, it follows that if $M$ starts its computation in $q$ and reads only $\sigma$'s, then it enters its  cycle in $k$ ($k \leq |Q|$) steps and stays there until the counter hits zero. Let $t_\sigma$ and $r_\sigma$ be the period and the difference of this cycle respectively, $ q' $ be the first visited state in the cycle after the first $k$ steps, and $ r' $ be the value added to the counter in these $k$ steps.  Then we have $c\xrightarrow{\sigma^k}(q', v+r'),$ $c'\xrightarrow{\sigma^k}(q', v'+r'),$ and  $|v-v'|$ is a multiple of $r_\sigma$.
		
		After  reading $\sigma^{k}$, $M$ will return to the  state $q'$  and will increase the  counter by the same value $r_{\sigma}$  after every $t_{\sigma}$ steps as long as the value of the counter is not zero. It is clear that there exists some non-negative integer $l$ such that either $ c \xrightarrow{\sigma^{k+t_{\sigma} \cdot l}}(q', v+r'+l\cdot r_{\sigma})=(q',v'+r')$  or $ c' \xrightarrow{\sigma^{k+t_\sigma \cdot l}}(q', v'+r'+l \cdot r_{\sigma})=(q',v+r')$. 
		Thus, $c$ and $c'$ are $\sigma$-synchronized.  
	\end{proof}
	
	\begin{lemma}\label{lemma1}
		There exists at least one pair of strings ($u_1,u_2)$  such that $u_1\in \one,$ $u_2\in \none$, and 
		{
			\begin{enumerate}
				\renewcommand{\labelenumi}{\rm (\Alph{enumi})}
				\item $c(u_1)=c(u_2) $ or
				\item $c(u_1) = (q(u_1),v(u_1)) \neq c(u_2) = (q(u_2),v(u_2)) $ but $q(u_1)=q(u_2), $ $|v(u_1)|, |v(u_2)| \in \omega(n)$,
				where $n$ is a sufficiently long length.
			\end{enumerate}
		}
	\end{lemma}
	\begin{proof}
		We have two cases. 
		\medskip
		
		{\noindent \bf Case 1:} There exists a symbol  $\sigma \in\{a, b, c\}$ such that $|v( \sigma^n)|\in O(1)$ holds. Without loss of generality, we pick $\sigma = a$. Then, for all inputs $a^n$, we have a constant number of all possible configurations, since the number of states is constant and the possible different values of counter is bounded by $O(1)$. So there exist $n_1$ and $n_2$ $(n_1 < n_2)$ such that $c(a^{n_1}) = c(a^{n_2})$. 
		
		We take $u=b^{n_1}d^{n_1+n_2} $. It is clear that $c(a^{n_1} u)=c(a^{n_2} u)$ therefore $q(a^{n_1} u)=q(a^{n_2} u)$ and $v(a^{n_1} u)=v(a^{n_2} u)$. But  $a^{n_1} u \in \mathtt{ONE}$ and  $a^{n_2} u \in \mathtt{NONE} $.
		\medskip
		
		{\noindent \bf Case 2:} For every symbol $ \sigma \in \{a,b, c\}$, $|v(\sigma^n)|\in \omega(1)$ holds.  We will construct $ u_1 $ and $ u_2 $ in four steps and in each step we define a part of them, i.e.,  
		\[
		u_1 = x_1 x_2 x_3 x_4 \mbox{ and } u_2 = y_1 y_2 y_3 y_4,  
		\]
		where $ x_1,y_1 \in \{a\}^*, x_2,y_2 \in \{b\}^*, x_3,y_3 \in \{c\}^*, \mbox{ and } x_4,y_4 \in \{d\}^* $.
		\smallskip
		
		{\noindent \bf Step 1.}
		We pick $r=r(n)$ such that $|v(a^r)|\in \omega(n)$ and we set  $x_1 = y_1 = a^r$. Since $ |v(a^n)| \in \omega(1) $, we can always find such an $ r $ depending on $ n $. Moreover, at each step of a computation, $M$ can increase or decrease the value of the counter by constant amount, and so, for any string $ z $ ($|z|\in O(n)$), we always have $|v(a^rz)|\in \omega(n)$. 
		\smallskip
		
		{\noindent \bf Step 2.} 
		For $k > |Q|$,  we consider the following sequence of states $q(a^r\,b),$ $q(a^r\,b^2),\dots,$ $q(a^r\,b^k)$. Then, there must exist two distinct non-negative integers $k_1 $ and $ k_2$ ($k_1< k_2 < k$ and $k_1\leq |Q|$) such that all $q(a^r \,b),\dots, q(a^r\,b^{k_1}),\dots, q(a^r\,b^{k_2-1})$ are different and  $q(a^r\,b^{k_2} ) = q(a^r\,b^{k_1})$.    By Lemma \ref{lemma}, the number  $t_b=k_2-k_1$ is the period of cycle and $ r_b $ is the difference of cycle.
		
		We set $ x_2=b^{k_1}$ and $ y_2=b^{k_2} $. Let $N_a=v(a^r)$, $N_{b_1}=v(a^r \,b^{k_1})-v(a^r),$ and $N_{b_2}=v(a^r \,b^{k_2})-v(a^r)$. Then we have
		\begin{equation*}
		\begin{array}{l}
		q(a^r \, b^{k_1})=q(a^r \, b^{k_2}),\\
		v(a^r \, b^{k_1})=N_a+N_{b_1}, \\
		v(a^r \, b^{k_2})=N_a+N_{b_2}, \mbox{ and}\\
		N_{b_2}-N_{b_1}=r_b.\\
		\end{array}
		\end{equation*}
		\smallskip
		
		{\noindent \bf Step 3.} We set $ x_3 = y_3 = c^{k_1}$. Let  $N_{c}=v(a^r\, b^{k_1}\,c^{k_1})-v(a^r\, b^{k_1}) $. Then, we have the followings: 
		\begin{equation}\label{step3}
		\begin{array}{l}
		q(a^r \, b^{k_1}\, c^{k_1})=q(a^r \, b^{k_2}\, c^{k_1}),\\
		v(a^r \, b^{k_1}\, c^{k_1})=N_a+N_{b_1}+N_c, \\
		v(a^r \, b^{k_2}\, c^{k_1})=N_a+N_{b_2}+N_c, \mbox{ and}\\
		v(a^r \, b^{k_2}\, c^{k_1})-v(a^r \, b^{k_1}\, c^{k_1})=r_b.\\
		\end{array}
		\end{equation}
		
		{\noindent \bf Step 4.} 
		Let  $Q^d_j$ be the subset from $d$-partition ${\cal Q}^d$ that contains $q(x_1 x_2 x_3) = q(y_1 y_2 y_3)$ and then $ t_d$ and $r_d$ be the period and the difference of the cycle, respectively. 
		Depending on the values $r_b$ (from the step 2) and $r_d$, we can have different cases. We denote $ x_1 x_2 x_3 $ by $   x_{123} $ and  $ y_1 y_2 y_3 $ by $ y_{123} $.
		
		The case of $r_b=0$: $v(x_{123})=v(y_{123})$ and so $c(x_{123})=c(y_{123})$.
		We set $x_4=y_4=d^l$ such that $|x_4| \geq | y_{123} | > | x_{123} | $.
		Then $c(x_{123}x_4)=c(y_{123}y_4)$ but $ x_1 x_2 x_3 x_4 \in \one$ and $ y_1 y_2 y_3 y_4 \in \none$.
		
		The case of $r_b \neq 0$ and $r_d \neq 0$: If $|v( x_{123})-v( y_{123} )| = r_b $ is a  multiple of $r_d$, then due to Lemma \ref{lemma_synch}, we can conclude that  $c(x_{123})$ and $c(y_{123})$ are $d$-synchronized. We set $ x_4=d^{l_1+l}$ and $y_4 = d^{l_2+l}$, where $l_1$ and $l_2$ are the numbers of steps that are needed to synchronize the configurations $c(x_{123})$ and $c(y_{123})$ respectively, and $l$ is the value providing that $| x_4 |\geq | x_{123} |$ and $| y_4 |\geq | y_{123} |$. Thus, we can follow that $ c(x_{123}x_4) = c(y_{123}y_4) $ but  $ x_1 x_2 x_3 x_4 \in \one$ and $ y_1 y_2 y_3 y_4 \in \none$.
		
		If $ r_b $ is not multiple of $ r_d $, then we can re-define $ y_2 $ as $ b^{k_1+t_b \cdot r_d} $ by setting $ k_2 = k_1 + t_b \cdot r_d $. Then, Equations \ref{step3} can be rewritten as
		\begin{equation*}
		\label{step4}
		\begin{array}{l}
		v(a^r \, b^{k_1}\, c^{k_1})=N_a+N_{b_1}+N_c, \\
		v(a^r \, b^{k_2}\, c^{k_1})=N_a+N_{b_1}+r_b \cdot r_d+N_c,\\
		v(a^r \, b^{k_2}\, c^{k_1})-v(a^r \, b^{k_1}\, c^{k_1})=r_b\cdot r_d.\\
		\end{array}
		\end{equation*}
		This update on $ y_2 $ concludes that $c(x_{123})$ and $c(y_{123})$ are $d$-synchronized as described above, and so $ c(x_{123}x_4) = c(y_{123}y_4) $ but  $ x_1 x_2 x_3 x_4 \in \one$ and $ y_1 y_2 y_3 y_4 \in \none$.
		
		The case of $r_b\neq 0$ and $r_d=0$: 
		We set $ x_4 = y_4 = d^{l_1+l}$ such that $l_1$ is the minimum numbers of steps that is sufficient to enter the cycle and $l$ is the minimum value providing that $| y_4 | \geq | y_{123} | > |x_{123} | $. Thus, we can follow that $ q(x_{123}x_4) = q(y_{123}y_4) $ and $ | v(x_{123}x_4) | \in \omega(n) $ and $ | v(y_{123}y_4) | \in \omega(n) $, but $ x_1 x_2 x_3 x_4 \in \one$ and $ y_1 y_2 y_3 y_4 \in \none$.
	\end{proof} 
	
	Now, we construct the pair of strings $w $ and $ w'$ that are non-equivalent with respect to $\onenone$. Due to Lemma \ref{lemma1}, there exist two strings $ u_1$ and $ u_2$ such that $ u_1 \in \one$ and $ u_2 \in \none$, and 
	\begin{enumerate}
		\renewcommand{\labelenumi}{\rm (\Alph{enumi})}
		\item $c(u_1)=c(u_2) $ or
		\item $c(u_1)\neq c(u_2) $ but $q(u_1)=q(u_2) $ and $|v(u_1)|, |v(u_2)| \in \omega(n)$.
	\end{enumerate}
	For simplicity, we call the pair $u_1$, $ u_2 $ as A-type if it  satisfies Condition A and we call it as B-type if the Condition B is satisfied.
	
	If the pair $u_1$ and $u_2$ is A-type, then, by assuming $ c(u_1) $ is the initial configuration, we can construct two new strings $ u_1' $ and $ u_2' $ as described above such that $ u_1' \in \none $ and $ u_2' \in \one $, and then, the pair $ w=u_1 u_1' $ and $ w'=u_2 u_2' $ is either A-type or B-type. Thus, $ M $ gives the same decisions for $ w $ and $ w'$ but $ w \in \onenoneyes $ and $ w' \in \onenoneno $.
	
	If the pair $ u_1 $ and $u_2$ is B-type, then we can define $ u_1' $ and $ u_2' $ as follows. Since the value of counter is superlinear in $n$, there exist two minimal non-negative integers $ k_1 $ and $ k_2 $ such that $ k_1 < k_2 $, $ k_1 \leq |Q| $ and $q(u_1 a^{k_1}) = q(u_2 a^{k_2})$. We set $ u_1' = a^{k_2}b^{k_1} d^{k_1+k_2} $ and $ u_2' = a^{k_1}b^{k_1} d^{k_1+k_2} $. It is clear that (1) $ u_1' \in \none $ and $ u_2' \in \one $, and (2) $ q(u_1 u_1') = q(u_2 u_2') $ and the values of counter are superlinear in $n$ ($\in w(n)$) for $ w= u_1 u_1' $ and $ w' = u_2 u_2' $.  Thus, $ M $ gives the same decisions for $ w $ and $ w'$ but $ w \in \onenoneyes $ and $ w' \in \onenoneno $.
	
	With the contradiction in each possible case, we can conclude that there is no 1D1CA solving $ \onenone $.
\end{proof}

We can easily extend the above impossibility proof for the promise problem $\onenonet$ ($ t > 1 $).

\begin{theorem} 
	There is no 1D1CA that can solve promise problem  $\onenonet$  for any  $t \in \mathbb{Z}^+$.
\end{theorem}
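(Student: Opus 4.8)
The plan is to iterate the single-block construction behind Theorem~\ref{thm_onenone} exactly $t$ times, peeling off one $\onenoneyes$-block from the yes-instance and one $\onenoneno$-block from the no-instance at each stage until the whole input is confused. Assume for contradiction that a 1D1CA $M$ solves $\onenonet$. First I would record the governing observation: the verdict of a 1D1CA is determined solely by the pair (state, counter status $Z$/$NZ$) reached just before the right end-marker \$ is scanned, because the transition on \$ --- and hence the final state --- is fixed by that pair. So it suffices to exhibit $w\in\onenonetyes$ and $w'\in\onenonetno$ on which $M$ halts in the same state with the same counter status.

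I would then set up an induction maintaining the invariant that, after $M$ has read the first $i$ matched block-pairs of $w$ and of $w'$ (for $0\le i\le t$), the two computations are in the same state and their counters share the same status: either they are \emph{equal} (the aligned case) or both non-zero with absolute value superlinear in the input length (the B-aligned case). The base case $i=0$ is the common initial configuration $(q_0,0)$. In the aligned case both paths enter the next block from one and the same configuration, so I simply re-run the machinery of Lemmas~\ref{lemma}, \ref{lemma_synch} and~\ref{lemma1}, all of which start from an \emph{arbitrary} configuration, taking that configuration as the ``initial'' one; this produces a $\onenoneyes$-block for $w$ and a $\onenoneno$-block for $w'$ after which the two paths are again aligned (A-type) or B-aligned (B-type), so the invariant is preserved.

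The hard part will be the inductive step in the B-aligned case, where the two paths share a state $q$ but sit on distinct large counter values $v\neq v'$ and must read \emph{different} blocks. The point to exploit is the superlinear margin supplied by Lemma~\ref{lemma1}(B): since $|v|,|v'|$ exceed $m\cdot|Q|$ by a superlinear amount, no block --- whose length is $O(n)$ and which therefore perturbs the counter by only $O(n)$ --- can drive either counter to zero, so throughout the remaining computation $M$ always reads status $NZ$. In this counter-blind regime the state trajectory produced by a block depends only on the current state and not on the exact counter value, so the state dynamics is a genuine finite automaton; and since a finite automaton cannot compare symbol counts, it cannot separate $\one$-words from $\none$-words by final state. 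Concatenating two such matchings, I obtain a $\onenoneyes$-block $B$ and a $\onenoneno$-block $B'$ that both lead from state $q$ to one common state $s^\ast$. Reading $B$ from $(q,v)$ and $B'$ from $(q,v')$ then lands both paths in $s^\ast$ with still-large counters, so the B-aligned invariant survives; note it is never left once entered.

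After the $t$-th block both computations are in the same state with the same counter status, whence by the opening observation $M$ returns identical verdicts on $w\in\onenonetyes$ and $w'\in\onenonetno$ --- the required contradiction. The only remaining work is routine bookkeeping: padding every block with enough trailing $d$'s to satisfy the length constraint $|y|\ge|u|$ and to keep the counters within their superlinear margin, which is always possible because each $\one$- or $\none$-word ends in a free run of $d$'s and the margins of Lemma~\ref{lemma1} dominate the $O(n)$ total drift accumulated over the $t$ blocks.
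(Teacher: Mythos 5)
Your proposal is correct and follows essentially the same route as the paper's proof: iterate the $\onenone$ construction block by block, preserving either an equal-configuration (A-type) alignment or a same-state, both-counters-large (B-type) alignment, and once the B-type regime is entered use that the counter status remains $NZ$ for the rest of the input, so that $M$ degenerates into a finite automaton which is then fooled by pigeonhole pumping on the remaining equality checks. The only (cosmetic) difference is quantitative bookkeeping: the paper fixes the B-type counter margin as $\Omega(t^2 n^2)$ to dominate the total drift, whereas you rely on the $\omega(n)$ margin together with $t$ being a fixed constant, which amounts to the same thing.
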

\begin{proof}
	Let $ M $ be a 1D1CA  solving $\onenonet$. In the previous proof, when starting in some configuration, say $ c(u) $ for some strings $u$, we show how two construct two different strings, say $u_1$ and $u_2$, such that $ u_1 \in \one $, $ u_2 \in \none $, and $ M $, after reading $ u_1 $ and $ u_2 $, ends with either the same configuration ($ c(u u_1) = c(u u_2) $) or with the same state ($ q(u u_1) = q(u u_2) $) and with some values of counter that are superlinear for some sufficiently big $ n $ (i.e., $ |v(u u_1)|, |v(u u_2)| \in \omega(n) $). We call the former pair as A-type and the latter pair as B-type. For B-type pairs, we assume here that the values of counter are not allowed to be less than a value quadratic in $ n $ and $ t $, i.e., $ |v(u u_1)|, |v(u u_2)| \in \Omega(t^2 n^2) $.
	
	Based on these facts, we can construct the following two strings
	\[
	w = w_1 w_2 \cdots w_t \mbox{ and }
	w' = w_1' w_2' \cdots w_t'
	\]
	such that each $ w_j \in \onenoneyes $, each $ w_j' \in \onenoneno $ ($ 1 \leq j \leq t $), but $ M $, after reading $ w $ and $ w' $, finishes its computation in the same state and in the same status of counter. Thus, $ M $ gives the same decision for a yes-instance and for a no-instance, which leads us to conclude that $ M $ cannot solve $ \onenonet $.
	
	We start from the initial configuration. Each $ w_j $ ($ w_j' $) is composed by two strings $ u_j y_j $ ($ u_j' y_j' $) such that  $ u_j, y_j' \in \one $ and $ y_j,u_j' \in \none $. For $ j=1,\ldots,t $, we first construct $ u_j $ and $ u_j' $, and then $ y_j $ and $ y_j' $. If all pairs are A-type and then the construction is straightforward since M ends in the same configurations after reading A-type pairs and from the same configuration we can always construct two new pairs as desired. 
	
	If we obtain a B-type pairs at some point of the construction, we can define the remaining parts of $w$ and $w'$ as we do at the end of the previous proof. First, we can be sure that the values of counter are quadratic in $t$ and $n$ ($ |v(u u_1)|, |v(u u_2)| \in \Omega(t^2 n^2) $). Then, the length each new obtained pair can be easily bounded by $ O(n) $. That means the status of the counter will be the same for the remaining of the computation and so $ M $ behaves like a deterministic finite automaton. Thus, it is very easy to fool $ M $ when constructing the remaining pairs that requires equality checks. 
\end{proof}

\section{New results on classical counter automata}

In this section, we show the results that separate the expressive power of several models of blind/non-blind counter automata. For this purpose, we denote the class of languages recognized by a {\it Model} as $\mathcal{L}(\mbox{{\it Model}})$.

First of all, we present a 1P1BCA algorithm for the Kleene closure of equality language:
\[
\mathtt{EQ^*}=\{ \varepsilon \} \cup \{ a^{n_1}b^{n_1}\cdots a^{n_k}b^{n_k} | n_i>0 (1\leq i \leq k), k\geq 1  \},
\]
which was shown not to be recognized by any one-way deterministic finite automaton with multi blind counters \cite{Gre78}. 
Recently, Yakary{\i}lmaz presented a negative one-sided error 1Q1BCA algorithm for this language and he conjectured that it cannot be recognized by 1P1BCAs \cite{Yak12B}. Now, we show that this conjecture is false. It is also surprising that our new algorithm is kind of a probabilistic adaptation of the quantum algorithm given by Yakary{\i}lmaz.

\begin{theorem}
	\label{theorem:rtP1BCA_EQ}
	The language $ \mathtt{EQ^*} $ can be recognized by a 1P1BCA $ M $ with negative one-sided error bound $ \frac{1}{3} $.
\end{theorem}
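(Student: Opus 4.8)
The plan is to build a 1P1BCA whose finite-state control deterministically checks the regular skeleton of the input while its counter, driven by independent random choices made once per block, accumulates a random integer linear combination of the per-block differences; this combination vanishes identically exactly for the members of $\mathtt{EQ^*}$ and vanishes with small probability otherwise.

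First I would use the states alone to verify that $w$ has the shape $(a^+b^+)^*$. When this succeeds, write $w = a^{m_1}b^{p_1}\cdots a^{m_k}b^{p_k}$ with all $m_i,p_i \geq 1$ and set $d_i = m_i - p_i$; the members of $\mathtt{EQ^*}$ of this shape are precisely those with every $d_i = 0$. If the skeleton check fails, the control enters a dead non-accepting state, so by the blind acceptance rule (accept only if the final state is accepting \emph{and} the counter is zero) such inputs are rejected with probability $1$; the empty string is handled by making the start state accepting with counter $0$. This reduces the problem to separating, among inputs of correct shape, the all-zero case from the case where some $d_j \neq 0$.

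The core mechanism is a random-coefficient test with maximal counter step $m = 3$. At the first symbol of each block the computation splits into three equiprobable paths, each selecting a coefficient $c_i \in \{1,2,3\}$ that is stored in the state until the block ends; the counter is then increased by $c_i$ on each $a$ and decreased by $c_i$ on each $b$ of that block, so every step changes the counter within $\{-3,\dots,3\}$. After the whole input the counter holds $L = \sum_{i=1}^{k} c_i d_i$, where $c_1,\dots,c_k$ are mutually independent because each block's split is a fresh choice. If $w \in \mathtt{EQ^*}$ then all $d_i = 0$, so $L = 0$ on every path; the automaton ends in the accepting state with counter $0$ and accepts with probability $1$, which is the required negative one-sided guarantee.

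The remaining and most delicate point is the error bound on the no-instances of correct shape. Let $j$ be the largest index with $d_j \neq 0$. Conditioning on all the other coefficients, write $L = S + c_j d_j$ with $S$ fixed; since $d_j \neq 0$, the three equally likely values $c_j d_j$ are distinct, so at most one choice of $c_j$ gives $L = 0$, whence $\Pr[L = 0] \leq \frac{1}{3}$. For such an input the automaton accepts only on paths with $L = 0$, so it rejects with probability at least $\frac{2}{3}$, giving negative one-sided error bound $\frac{1}{3}$. The main obstacle is precisely this conditioning argument combined with the finite-state bookkeeping that keeps each block's coefficient constant throughout its $a$-run and $b$-run and resets it at each new block; once that is in place the rest is routine control.
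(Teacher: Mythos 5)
Your proposal is correct and matches the paper's proof in all essentials: the same three-way equiprobable split per block with coefficients $\{1,2,3\}$, certain acceptance of members via a zero counter, and a $\frac{1}{3}$ bound on the acceptance probability of non-members obtained by fixing everything except the coefficient of a block with $d_j \neq 0$ (the paper phrases this as each path before the last unequal block splitting into three sub-paths, at most one of which can end with counter zero). Your conditional-probability formulation is a clean restatement of the paper's path-counting argument, and even makes the choice of the \emph{largest} unequal index unnecessary, but it is the same underlying idea.
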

\begin{proof}
	We assume that the input is of the form $a^{n_1}b^{m_1}\cdots a^{n_k}b^{m_k}$. Otherwise, $M$ rejects the input deterministically (exactly). At the beginning of each block $a^{n_l}b^{m_l}$ ($1\leq l \leq k$), $M$ selects one of the following three paths ($Path_i$'s) with equal probability:
	\[
	\begin{array}{ll}
	Path_i (1\leq i \leq 3):& \mbox{ $M$ increases (resp., decreases) the counter by $i$}\\
	& \mbox{each time 
		reading an $a$ (resp., a $b$) of the block.}
	\end{array}
	\]
	The computation always ends in an accepting state (except the deterministic check mentioned at the beginning). Thus, the input is accepted if and only if the value of counter is zero. It is obvious that $M$ accepts any member of $ \tt EQ^*$ with certainty. We consider the case that the input $w \not\in \tt EQ^*$. Let $i_{max}$ be the greatest index satisfying $n_{i_{max}}\neq m_{i_{max}}$, i.e., $a^{n_{i_{max}}}b^{m_{i_{max}}}$ is the last block satisfying $n_{i_{max}}\neq m_{i_{max}}$. Let $ path' $ be a probabilistic path before reading the $ i_{max} $-th block having the counter value $c$. This path will split into three sub-paths $ subpath'_1 $, $ subpath'_2 $, and $ subpath'_3 $ and each subpath reads the block as described above. Let $ c_1 $, $c_2$, and $c_3$ be the counter values of these sub-paths, respectively, after reading the block.  Any computation starts from $ subpath'_i $ will have the same counter value of $ c_i $ at the end of the computation since the remaining blocks have the same numbers of $a$'s and $b$'s, where $ 1 \leq i \leq 3 $. Assume that $ subpath'_i $ leads to a decision of acceptance. This is possible only if $ c_i =0$. Let $ d = n_{i_{max}} - m_{i_{max}} \neq 0 $. Then the values of $ c_1 $, $ c_2 $, and $c_3$ are $ c+d $, $ c+2d $, and $c+3d$, respectively. Therefore, only one of them can be zero. That is, the maximum accepting probability that $ path' $ can contribute is $ \frac{1}{3} $. This is the case also for all other probabilistic paths that exist just before reading the $ i_{max} $-th block. Therefore the overall accepting path can be bounded by $ \frac{1}{3} $.
\end{proof}

It is clear from the analysis given in the proof that the error bound can be reduced to $ \frac{1}{k} $ for any $ k $ by spiting into $ k $ probabilistic paths on each block instead of $3$.
\begin{corollary}
	The language $ \mathtt{EQ^*} $ can be recognized by a 1P1BCA $ M $ with any negative one-sided error bound $ \epsilon \leq \frac{1}{2} $.
\end{corollary}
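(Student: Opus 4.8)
The plan is to generalize the three-path construction from the proof of Theorem~\ref{theorem:rtP1BCA_EQ} to an arbitrary number $k$ of paths, and then verify that the same one-sided error analysis yields the sharper bound $\frac{1}{k}$, which can be made at most $\frac{1}{2}$ by taking $k \geq 2$. First I would describe the machine $M_k$: at the beginning of each block $a^{n_l}b^{m_l}$, instead of splitting into three equiprobable subpaths indexed $i \in \{1,2,3\}$, the automaton splits into $k$ equiprobable subpaths indexed $i \in \{1,\dots,k\}$, where $Path_i$ increases the counter by $i$ on each $a$ and decreases it by $i$ on each $b$ within that block. As before, every computation ends in an accepting state (modulo the deterministic well-formedness check that the input matches $a^{n_1}b^{m_1}\cdots a^{n_k}b^{m_k}$), so acceptance happens exactly when the final counter value is zero. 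Note that this requires the counter to be able to change by up to $k$ in a single step, i.e.\ the parameter $m$ of the 1P1BCA is set to $k$; this is permitted by the model.

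Next I would redo the error bound verbatim. Completeness is unchanged: any $w \in \mathtt{EQ^*}$ has every block balanced, so each subpath returns the counter to zero and $M_k$ accepts with certainty. For soundness on $w \notin \mathtt{EQ^*}$, I would again let $i_{max}$ be the greatest index with $n_{i_{max}} \neq m_{i_{max}}$, set $d = n_{i_{max}} - m_{i_{max}} \neq 0$, and fix a path $path'$ with counter value $c$ just before the $i_{max}$-th block. This path splits into $k$ subpaths whose counter values after the block are $c + d, c + 2d, \dots, c + kd$, and since the remaining blocks are all balanced these values are preserved to the end. Because $d \neq 0$, the $k$ values $c + jd$ ($1 \le j \le k$) are pairwise distinct, so at most one of them equals zero; hence $path'$ contributes accepting probability at most $\frac{1}{k}$. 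Summing over all paths reaching the $i_{max}$-th block gives total accepting probability at most $\frac{1}{k}$, which is the claimed negative one-sided error.

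I do not anticipate a genuine obstacle here, since the corollary is essentially a parametric re-run of the theorem's argument; the only point requiring a sentence of care is the observation that the $k$ candidate final counter values are distinct precisely because $d \neq 0$, so that no two of the $k$ branches can simultaneously accept. The remaining step is purely arithmetic: given any target error $\epsilon \le \frac{1}{2}$, choose $k = \lceil 1/\epsilon \rceil \ge 2$ so that $\frac{1}{k} \le \epsilon$, which establishes the stated range of achievable error bounds.
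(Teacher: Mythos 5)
Your proposal is correct and takes essentially the same route as the paper, which derives the corollary from the proof of Theorem~\ref{theorem:rtP1BCA_EQ} by exactly this generalization: split into $k$ equiprobable paths per block and observe that the $k$ candidate final counter values $c+jd$ ($1\leq j\leq k$) are pairwise distinct when $d\neq 0$, so at most one branch can accept. Your two explicit touches --- setting the counter-step parameter $m=k$ and choosing $k=\lceil 1/\epsilon\rceil$ --- merely spell out details the paper leaves implicit.
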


Remark that since any language recognized be a 1P1BCA with negative one-sided error is recognized by 1U1BCA, we can also conclude that $ \tt EQ^* $ can be recognized by 1U1BCAs.

Even though any number of blind counters is useless for a 1DFA (or a 1NFA) \cite{Gre78}, a single non-blind counter is enough in order to recognize $ \tt EQ^* $, i.e., 1D1CAs can recognize $ \tt EQ^* $. These facts together with Theorem~\ref{theorem:rtP1BCA_EQ} imply that $\mathcal{L}(1D1BCA) \subsetneq \mathcal{L}(1P1BCA) \cap \mathcal{L}(1D1CA)$. 
Another related result is that Freivalds \cite{Fre79} proved that $ \mathtt{EQ^3} = \{ c^nd^ne^n \mid n \geq 0 \} $ can be recognized by 1P1BCAs with arbitrary small negative one-sided error bound and this non-context free language, of course, cannot be recognized by a 1D1CA. 
We represent our result with the known facts in Figure~\ref{fig:summary2}. We still do not know whether bounded-error 1Q1BCAs are more powerful than bounded-error 1P1BCAs.

\begin{figure}
	\begin{center}
		\includegraphics[scale=0.28]{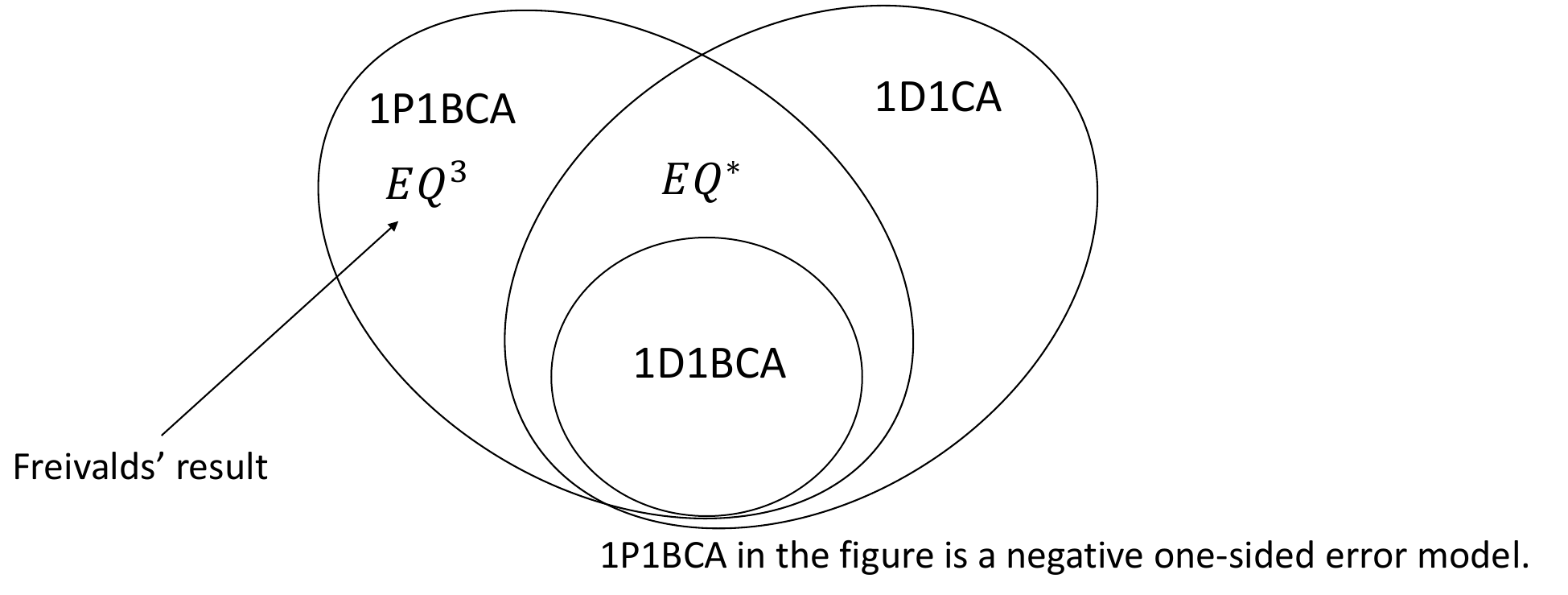}
	\end{center}
	\caption{Separation between $\mathcal{L}(1D1BCA)$ and  $\mathcal{L}(1P1BCA) \cap \mathcal{L}(1D1CA)$}
	\label{fig:summary2}
\end{figure}

Our next result is on incomparability of 1U1BCAs and 1D1CAs. 
In order to show it, we consider the complement of $\mathtt{EQ^*}$:
\[
\overline{\mathtt{EQ^*}}=\overline{\{a^{n_1}b^{n_1} \ldots a^{n_m} b^{n_m} | m\geq 0\}}.
\]

\begin{theorem}
	1U1BCAs cannot recognize $\overline{\mathtt{EQ^*}}$.
\end{theorem}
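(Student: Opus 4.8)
The plan is to assume that some 1U1BCA $M$ recognizes $\overline{\mathtt{EQ^*}}$ and to derive a contradiction by a pumping argument restricted to inputs of the form $a^i b^j$. The key modeling observation is that, because $M$ is \emph{blind}, none of its transitions depend on the current counter value; hence $M$ is really a nondeterministic finite automaton over its state set $Q$ whose edges additionally carry an integer increment in $\{-m,\dots,m\}$, and a computation path is just a walk in this weighted automaton that reads $\mathcent w\$$. By the universal acceptance mode combined with the blind acceptance condition (counter value $0$ and an accepting state), $M$ accepts $w$ if and only if \emph{every} such walk ends in an accepting state and with net counter value $0$; I will write $\mathrm{wt}(\pi)$ for the net counter value of a walk $\pi$. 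Note also that from every configuration at least one transition is available (the transition probabilities sum to $1$), so full-length walks always exist.

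First I would fix $n$ with $n>|Q|$ and recall that $a^n b^n\in\mathtt{EQ^*}$, hence $a^n b^n\notin\overline{\mathtt{EQ^*}}=L(M)$. Thus $M$ does not universally accept it, so there is a single ``bad'' walk $\pi$ on $\mathcent a^n b^n\$$ that either ends in a non-accepting state or has $\mathrm{wt}(\pi)\neq 0$. Since $\pi$ reads $n>|Q|$ symbols $a$, by the pigeon-hole principle two of the states visited inside this $a$-block coincide, so $\pi$ contains a cycle $C$ of length $\ell$ (with $1\le\ell\le|Q|$) and total increment $\omega$ that uses only $a$-transitions. Pumping $C$ once and twice produces walks $\pi_1$ and $\pi_2$ on $\mathcent a^{n+\ell}b^n\$$ and $\mathcent a^{n+2\ell}b^n\$$ respectively; crucially, inserting a copy of a cycle leaves the final state of the walk unchanged and only shifts its net increment, so $\pi_1$ and $\pi_2$ end in the same state as $\pi$ with net increments $\mathrm{wt}(\pi)+\omega$ and $\mathrm{wt}(\pi)+2\omega$.

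Both $a^{n+\ell}b^n$ and $a^{n+2\ell}b^n$ have unequal numbers of $a$'s and $b$'s, so they lie in $\overline{\mathtt{EQ^*}}=L(M)$ and are therefore universally accepted; in particular $\pi_1$ and $\pi_2$ must each end in an accepting state with net increment $0$. From the final states I conclude that $\pi$ also ends in an accepting state, so the badness of $\pi$ can only come from a nonzero net increment, i.e.\ $\mathrm{wt}(\pi)\neq 0$. From the increments I get $\mathrm{wt}(\pi)+\omega=0$ and $\mathrm{wt}(\pi)+2\omega=0$, whence $\omega=0$ and $\mathrm{wt}(\pi)=0$, contradicting $\mathrm{wt}(\pi)\neq 0$. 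This contradiction shows that no such $M$ exists. The step I expect to be the main obstacle is the modeling reduction in the first paragraph: formalizing that a blind counter acts as a shift-invariant integer weight and that the universal-plus-blind acceptance condition is equivalent to ``every reading walk ends in an accepting state with net counter $0$'', together with the care needed to ensure that the pumped cycle stays inside the $a$-block and that pumping preserves the final state while shifting the weight by exactly $\omega$. Once that reduction is in place, the pumping itself is routine.
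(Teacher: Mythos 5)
Your proof is correct and follows essentially the same route as the paper: take a string in $\mathtt{EQ^*}$ with a long initial $a$-block, extract a rejecting path (which exists by universal acceptance), find a state cycle in the $a$-block by pigeonhole, and pump it once and twice to get two strings in $\overline{\mathtt{EQ^*}}$ whose forced acceptance yields the contradiction. The only differences are cosmetic: you specialize to $a^n b^n$ instead of a general $a^{n_1}b^{n_1}\cdots a^{n_m}b^{n_m}$, and you resolve the weights by solving $\mathrm{wt}(\pi)+\omega=\mathrm{wt}(\pi)+2\omega=0$ directly rather than via the paper's case split on whether the cycle increment is zero.
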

\begin{proof}
	We assume that there exists a 1U1BCA $M$ that recognizes $\overline{\mathtt{EQ^*}}$. Let $Q$ be the set of states of $M$. Let $w=a^{n_1}b^{n_1}\cdots a^{n_m}b^{n_m}$ be a string not in $\overline{\mathtt{EQ^*}}$ where $n_1 >|Q|$ ($ w \in \tt EQ^* $). Then, there exists a rejecting path for $w$, say $p$. We consider the computation along the path $p$. 
	
	Since $n_1>|Q|$, there exists a state $s\in Q$ such that $M$ enters $s$ at least twice when reading $a^{n_1}$. We assume that $M$ enters the state $s$ just after reading $a^{t}$ and $a^{t'}$ ($0<t<t'<n_1$) in the first block. In other words, $M$ enters the state $s$ after the $t$-th step and the $t'$-th step. We divide the path $p$ into three subpaths $ p = p_1 \cdot p_2 \cdot p_3$ where $p_2$ starts from $(t+1)$-th step and finishes at the $t'$-th step. Then both of $ p' = p_1\cdot p_2 \cdot p_2 \cdot p_3$  and $ p'' = p_1\cdot p_2 \cdot p_2 \cdot p_2 \cdot p_3$ are valid computation paths for input strings $w_1=a^{n_1+(t'-t)}b^{n_1}\ldots a^{n_m}b^{n_m}$ and $w_2=a^{n_1+2(t'-t)}b^{n_1}\ldots a^{n_m}b^{n_m}$, respectively. Note that $w_1, w_2\in \overline{\mathtt{EQ^*}}$. Also, note that both of $ p' $ and $ p'' $ lead to the same final state as $p$. Then, at least one of $ \{p',p''\} $ has non-zero counter value or both of them has the same counter value as $p$ at the end of the computation. This is because if the counter value increases by $d (\neq 0)$ along with $p_2$, then the final counter values are different for $p'  $ and $ p'' $. If $d=0$, then the counter values are the same for $p$, $p'$, and $p''$.  Therefore, at least one of $ \{p',p''\} $ is a rejecting path. (Remember that, by the definition of blind counter automata, computation that ends with a non-zero counter value is always rejected.) However, both of $w_1$ and $w_2$ are in $\overline{\mathtt{EQ^*}}$. This is a contradiction.
\end{proof}



Figure~\ref{fig:summary3} summarizes the incomparability of 1U1BCAs and 1D1CAs. Note that 1D1CA can recognize $\overline{\mathtt{EQ^*}}$ and a negative one-sided bounded-error 1P1BCA algorithm is also a 1U1BCA algorithm since all members accepted with probability 1.

\begin{figure}
	\begin{center}
		\includegraphics[scale=0.28]{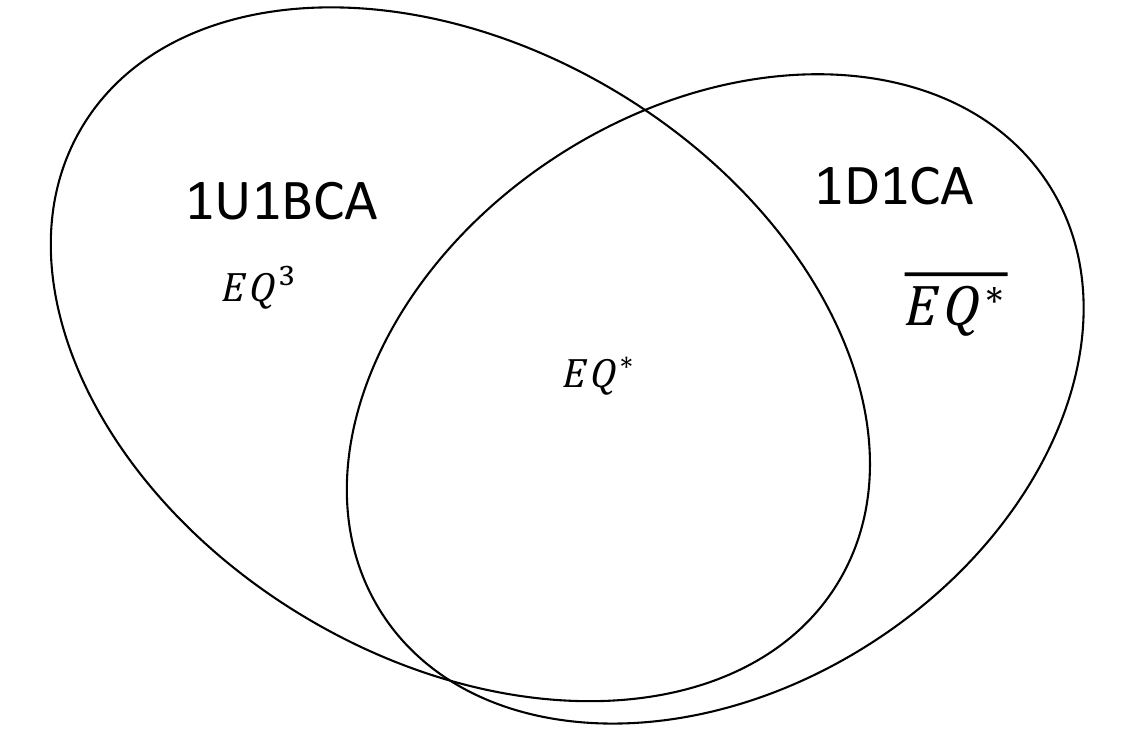}
	\end{center}
	\caption{Incomparability of 1U1BCAs and 1D1CAs}
	\label{fig:summary3}
\end{figure}

Our last result is the separation between $\mathcal{L}(1P1CA)$ and $\mathcal{L}(1U1BCA)\cup \mathcal{L}(1D1CA)$. In order to show it, we define language $L$ as follows:
\[
L=\overline{\mathtt{EQ^*}} \cup \mathtt{EQ^3} \mbox{ where }\mathtt{EQ^3}=\{c^n d^n e^n | n\geq 0\}.
\]
Then, we have the following theorems.
\begin{theorem}
	\label{theorem:1P1CA_L}
	$L$ can be recognized by a 1P1CA with negative one-sided bounded error.
\end{theorem}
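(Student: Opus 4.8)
The plan is to exploit the fact that the two components of $L$ live over essentially disjoint alphabets, so that a single deterministic glance at the input already decides which sublanguage the automaton must test. First I would have the machine read the left end-marker and the first input symbol. If the input is empty it accepts, since $\varepsilon \in \mathtt{EQ^3} \subseteq L$. If the first symbol is $d$ or $e$, then the nonempty string is neither a word over $\{a,b\}$ nor of the form $c^nd^ne^n$, hence not in $L$, so the machine rejects deterministically. Otherwise the first symbol is $a$, $b$, or $c$, and the machine commits to one of two deterministic ``modes.''

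In mode $\{a,b\}$ (first symbol $a$ or $b$) membership in $L$ reduces to membership in $\overline{\mathtt{EQ^*}}$. Since a 1D1CA recognizes $\mathtt{EQ^*}$ (as noted in the excerpt), a 1D1CA recognizes its complement over $\{a,b\}$ by exchanging accepting and rejecting states; the same deterministic machine also verifies that no symbol outside $\{a,b\}$ occurs, rejecting immediately otherwise. Because a 1D1CA is a special case of a 1P1CA, this mode runs with no error at all: every $\{a,b\}$-string in $L$ is accepted with probability $1$, and every nonempty $\{a,b\}$-string lying in $\mathtt{EQ^*}$ is rejected with probability $1$.

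In mode $\{c,d,e\}$ (first symbol $c$) membership in $L$ reduces to membership in $\mathtt{EQ^3}$. A 1P1CA can deterministically check the regular form $c^*d^*e^*$, rejecting any violation (in particular any occurrence of $a$ or $b$), while in parallel running Freivalds' 1P1BCA algorithm for $\mathtt{EQ^3}$; a non-blind counter simulates the blind counter simply by testing at the end whether the counter value is zero. By Freivalds' result this can be carried out with negative one-sided error bounded by any prescribed $\epsilon$: every $c^nd^ne^n$ is accepted with probability $1$, and every $c^*d^*e^*$-word with unequal blocks is accepted with probability at most $\epsilon$.

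Finally I would assemble these pieces and verify the one-sided bound. Because the mode is chosen deterministically from the input, there is no probabilistic interference between the two branches, so the one-sidedness of each branch is preserved: every member of $L$ is accepted with probability $1$, and every non-member is accepted with probability at most $\epsilon$, the only source of acceptance error being ill-formed $\mathtt{EQ^3}$-candidates handled by Freivalds. The point to watch is the boundary bookkeeping---correctly classifying the empty string, the strings beginning with $d$ or $e$, and the mixed-alphabet strings---together with the (routine) observation that a single non-blind counter suffices both for Freivalds' weighting test and for the deterministic $\overline{\mathtt{EQ^*}}$ check; beyond this case analysis I expect no genuine technical obstacle.
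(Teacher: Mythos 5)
Your proposal is correct and takes essentially the same approach as the paper: a deterministic dispatch on the first input symbol, running a zero-error 1D1CA for $\overline{\mathtt{EQ^*}}$ when it is $a$ or $b$ and a negative one-sided bounded-error 1P1CA for $\mathtt{EQ^3}$ (Freivalds' algorithm, with the blind counter simulated by a final zero test) otherwise, accepting $\varepsilon$ outright. Your additional bookkeeping---rejecting strings beginning with $d$ or $e$ and policing mixed alphabets---merely spells out details the paper leaves implicit in its ``otherwise, execute $M_{\mathtt{EQ^3}}$'' clause.
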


\begin{proof}
	Let $M_{\overline{\mathtt{EQ^*}}}$ be a 1D1CA that recognizes $\overline{\mathtt{EQ^*}}$. Also, let $M_{\mathtt{EQ^3}}$ be a 1P1CA that recognizes $\mathtt{EQ^3}$ with negative one-sided bounded error. In order to recognize $L$, we use $M_{\overline{\mathtt{EQ^*}}}$ and $M_{\mathtt{EQ^3}}$ as subautomata. Note that $\overline{\mathtt{EQ^*}} \subset \{a,b\}^*$ and $\mathtt{EQ^3} \subset \{c,d,e\}^*$. Thus, the following 1P1CA $M$ recognizes $L$:
	\begin{itemize}
		\item if the input is an empty string, $M$ accepts it.
		\item if the first symbol is $a$ or $b$, $M$ executes $M_{\overline{\mathtt{EQ^*}}}$.
		\item otherwise, $M$ executes $M_{\mathtt{EQ^3}}$.
	\end{itemize}
\end{proof}

\begin{theorem}
	\label{theorem:1D1CA_1U1BCA_L}
	Neither 1D1CAs nor 1U1BCAs can recognize $L$.
\end{theorem}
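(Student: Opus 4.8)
The plan is to treat the two models separately, exploiting that the two ingredients of $L$ live over disjoint alphabets: $\overline{\mathtt{EQ^*}}\subseteq\{a,b\}^*$ while $\mathtt{EQ^3}\subseteq\{c,d,e\}^*$, so that $L\cap\{c,d,e\}^*=\mathtt{EQ^3}$ and $L\cap\{a,b\}^*=\overline{\mathtt{EQ^*}}\cup\{\varepsilon\}$ (the empty string being the only overlap). For the 1D1CA part I would argue via the non-context-freeness of $\mathtt{EQ^3}$. Suppose a 1D1CA $M$ recognizes $L$. Since a 1D1CA is a deterministic pushdown automaton with an essentially unary stack, $L$ would be context-free. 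Restricting to inputs over $\{c,d,e\}$ --- either by prepending a deterministic test that rejects as soon as an $a$ or $b$ is read and then simulating $M$, or by invoking closure of one-counter languages under intersection with the regular set $\{c,d,e\}^*$ --- yields a 1D1CA recognizing $\mathtt{EQ^3}$. Then $\mathtt{EQ^3}=\{c^n d^n e^n\mid n\geq 0\}$ would be context-free, contradicting the classical pumping lemma for context-free languages.

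For the 1U1BCA part I would reuse, essentially verbatim, the pumping argument already given for the theorem that 1U1BCAs cannot recognize $\overline{\mathtt{EQ^*}}$. The key observation is that every string used in that argument lies in $\{a,b\}^*$, and there membership in $L$ coincides with membership in $\overline{\mathtt{EQ^*}}$ (the single exception, $\varepsilon$, never being invoked). Concretely, assume a 1U1BCA $M$ with state set $Q$ recognizes $L$, and take $w=a^{n_1}b^{n_1}\cdots a^{n_m}b^{n_m}\in\mathtt{EQ^*}$ with $m\geq 1$ and $n_1>|Q|$. As a nonempty member of $\mathtt{EQ^*}$ we have $w\notin L$, so $M$ admits a rejecting path $p$. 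Pumping a state cycle inside the first block $a^{n_1}$ produces $w_1=a^{n_1+(t'-t)}b^{n_1}\cdots$ and $w_2=a^{n_1+2(t'-t)}b^{n_1}\cdots$; both have a first block with more $a$'s than $b$'s, hence lie in $\overline{\mathtt{EQ^*}}\subseteq L$. Exactly as in the earlier proof, at least one of the two pumped paths ends in a rejecting configuration (a non-zero counter, or the same rejecting state as $p$), so $M$ rejects $w_1$ or $w_2$, contradicting $w_1,w_2\in L$.

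The routine cores of both parts are thus inherited from earlier results, so the step I would treat as the main obstacle is making the two reductions airtight. In the deterministic case this amounts to confirming that the deterministic alphabet test keeps the machine inside the 1D1CA class (or, more cleanly, citing closure of one-counter languages under intersection with regular languages). In the universal case it amounts to the membership bookkeeping $L\cap\{a,b\}^*=\overline{\mathtt{EQ^*}}\cup\{\varepsilon\}$, which guarantees that the rejected witness $w$ and the accepted witnesses $w_1,w_2$ genuinely sit on opposite sides of $L$ and hence that the inherited pumping argument still forces a contradiction. Once this bookkeeping is fixed, both contradictions follow immediately.
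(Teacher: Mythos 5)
Your proposal is correct and takes essentially the same approach as the paper: restrict each model to the relevant sub-alphabet, reducing the 1D1CA case to the impossibility of recognizing $\mathtt{EQ^3}$ (via non-context-freeness) and the 1U1BCA case to the paper's earlier theorem that 1U1BCAs cannot recognize $\overline{\mathtt{EQ^*}}$. Your explicit bookkeeping --- noting $L\cap\{a,b\}^*=\overline{\mathtt{EQ^*}}\cup\{\varepsilon\}$, checking the pumping witnesses avoid $\varepsilon$, and inlining the pumping argument rather than citing it --- is a careful touch that the paper's one-line ``ignore the other alphabet's transitions'' reduction leaves implicit, but the underlying argument is the same.
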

\begin{proof}
	If there exists a 1D1CA that recognizes $L$, then it can be regarded as a 1D1CA that recognizes $\mathtt{EQ^3}$ by ignoring transitions for the symbols $a$ and $b$. This is a contradiction.
	
	Similarly, if there exists a 1U1BCA that recognizes $L$, then it can be regarded as a 1U1BCA that recognizes $\overline{\mathtt{EQ^*}}$ by ignoring transitions for the symbols $c$, $d$, and $e$. This is a contradiction.
\end{proof}

By Theorems~\ref{theorem:1P1CA_L} and \ref{theorem:1D1CA_1U1BCA_L}, we can separate $\mathcal{L}(1P1CA)$ and $\mathcal{L}(1U1BCA)\cup \mathcal{L}(1D1CA)$ as illustrated in Figure~\ref{fig:summary4}.

\begin{figure}
	\begin{center}
		\includegraphics[scale=0.28]{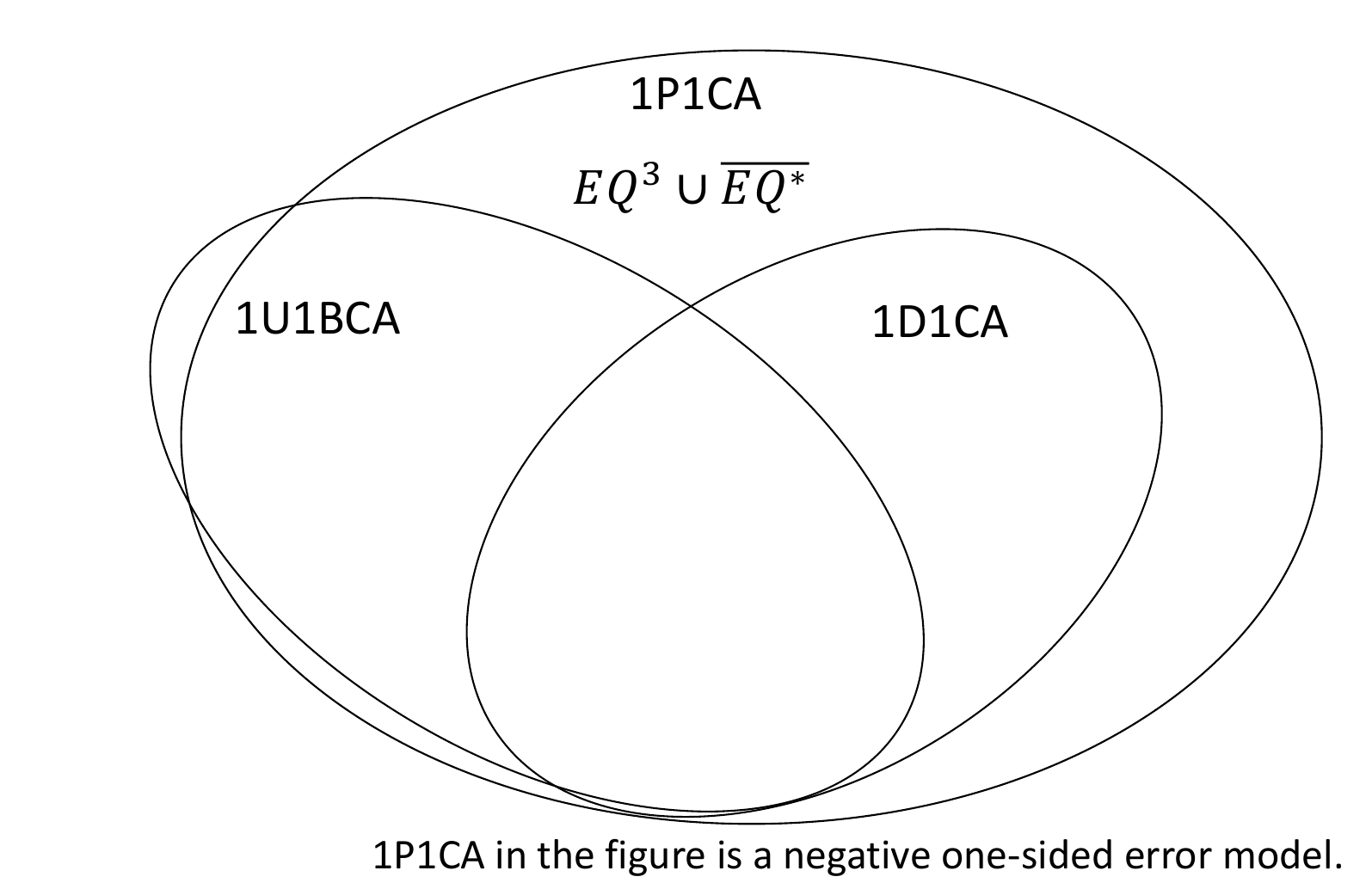}
	\end{center}
	\caption{Separation between $\mathcal{L}(1P1CA)$ and $\mathcal{L}(1U1BCA)\cup \mathcal{L}(1D1CA)$}
	\label{fig:summary4}
\end{figure}

As pointed before languages recognized by 1P1BCAs are also recognizable by 1U1BCAs. Thus, combining all the above results, we have the hierarchy of the models as illustrated in Figure~\ref{fig:summary}.

\begin{figure}
	\begin{center}
		\includegraphics[scale=0.28]{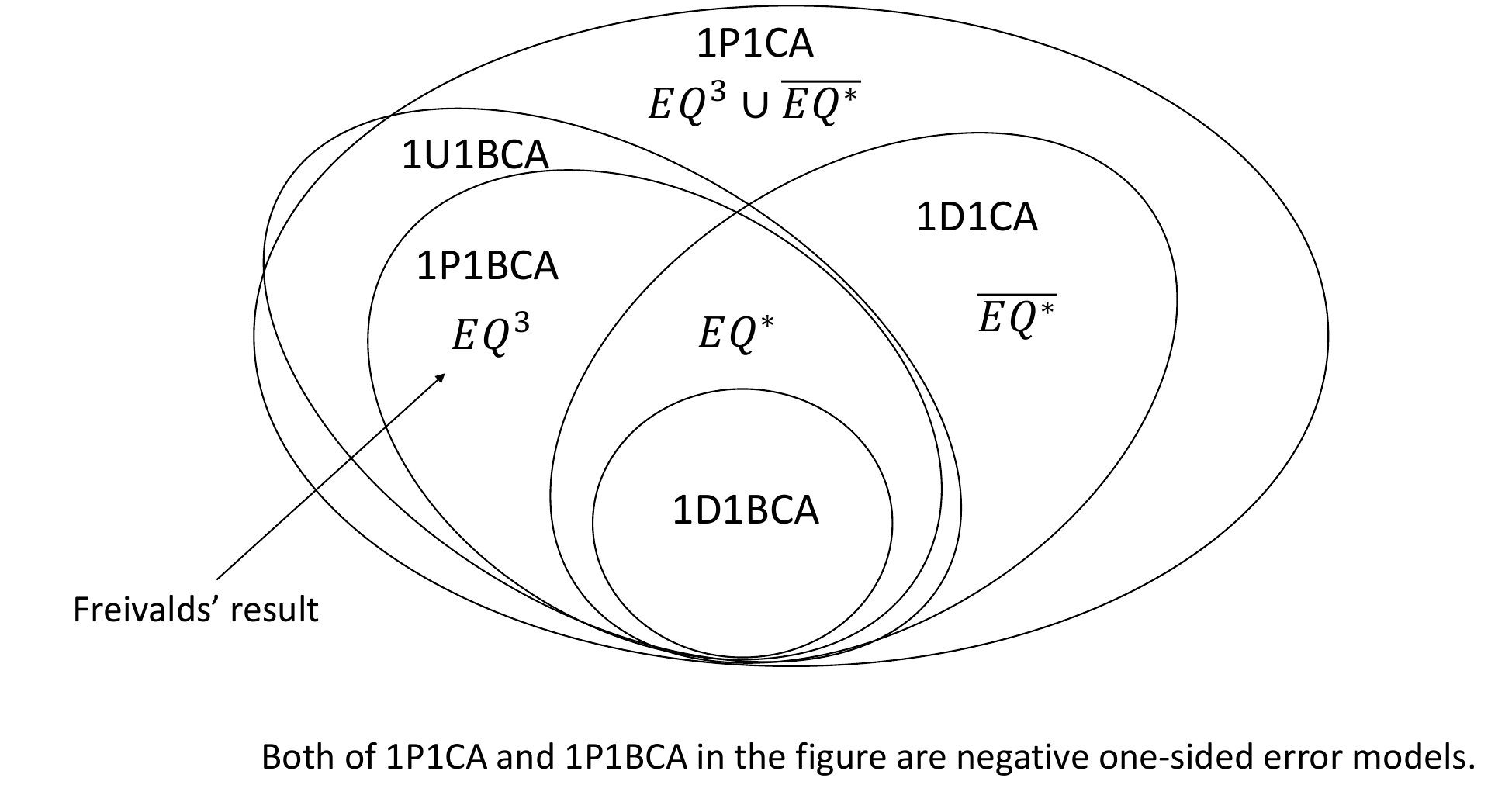}
	\end{center}
	\caption{Hierarchy of various models of counter automata}
	\label{fig:summary}
\end{figure}


As a future work, we find interesting to identify whether there is an alternation hierarchy for one-way blind-counter automata with and without $\varepsilon$-moves.

\acknowledgements
We thank Klaus Reinhardt for answering our question regarding the subject matter of this paper and anonymous reviewers for their helpful comments. 

Some parts of this work was done while Gainutdinova was visiting National Laboratory for Scientific Computing, Petr\'{o}polis, RJ, 25651-075, Brazil in June 2015, supported by CAPES with grant 88881.030338/2013-01.

Masaki was partially supported by JSPS KAKENHI Grant Numbers 16K00007, 24500003 and 24106009, and also by the Asahi Glass Foundation. Yakary{\i}lmaz was partially supported by CAPES with grant 88881.030338/2013-01 and ERC Advanced Grant MQC.

\bibliographystyle{alpha}
\bibliography{tcs}
\label{sec:biblio}

\end{document}